\documentclass[a4paper,UKenglish,cleveref,hyperref,autoref]{lipics-v2021}

\title{Tight Inapproximability of Max Independent Set in Triangle-Free Graphs}

\titlerunning{Tight Inapproximability of Max Independent Set in Triangle-Free Graphs}

\author{\'{E}douard Bonnet}{Univ Lyon, CNRS, ENS de Lyon, Université Claude Bernard Lyon 1, LIP UMR5668, France \and \url{http://perso.ens-lyon.fr/edouard.bonnet/}}{edouard.bonnet@ens-lyon.fr}{https://orcid.org/0000-0002-1653-5822}{}

\authorrunning{\'E. Bonnet}

\Copyright{Édouard Bonnet}

\category{}

\relatedversion{}

\supplement{}

\funding{}
 
\acknowledgements{}

\nolinenumbers 

\hideLIPIcs  

\EventEditors{John Q. Open and Joan R. Access}
\EventNoEds{2}
\EventLongTitle{42nd Conference on Very Important Topics (CVIT 2016)}
\EventShortTitle{CVIT 2016}
\EventAcronym{CVIT}
\EventYear{2016}
\EventDate{December 24--27, 2016}
\EventLocation{Little Whinging, United Kingdom}
\EventLogo{}
\SeriesVolume{42}
\ArticleNo{23}

\usepackage[utf8]{inputenc}  

\usepackage[T1]{fontenc}
\usepackage{lmodern}

\usepackage[colorinlistoftodos,bordercolor=orange,backgroundcolor=orange!20,linecolor=orange,textsize=normalsize]{todonotes}

\usepackage{amsmath}  
\usepackage{amssymb}     
\usepackage{bbm}
\usepackage{accents}
\usepackage{complexity}
\usepackage{booktabs}
\usepackage{fixmath}
\usepackage{paralist}
\usepackage{bm}

\makeatletter
\newtheorem*{rep@theorem}{\rep@title}
\newcommand{\newreptheorem}[2]{%
\newenvironment{rep#1}[1]{%
 \def\rep@title{#2 \ref{##1}}%
 \begin{rep@theorem}}%
 {\end{rep@theorem}}}
\makeatother

\newreptheorem{theorem}{Theorem}
\newreptheorem{lemma}{Lemma}

\crefname{observation}{Observation}{Observations}

\usepackage{xspace}
\usepackage{tikz}

\usepackage[ruled,vlined,linesnumbered]{algorithm2e}

\usetikzlibrary{fit}
\usetikzlibrary{arrows}
\usetikzlibrary{patterns}
\usetikzlibrary{calc}
\usetikzlibrary{shapes}
\usetikzlibrary{positioning}
\usetikzlibrary{math}
\usetikzlibrary{shapes.symbols}
\usetikzlibrary{decorations.pathreplacing}
\tikzset{draw half paths/.style 2 args={%
  decoration={show path construction,
    lineto code={
      \draw [#1] (\tikzinputsegmentfirst) -- 
         ($(\tikzinputsegmentfirst)!0.5!(\tikzinputsegmentlast)$);
      \draw [#2] ($(\tikzinputsegmentfirst)!0.5!(\tikzinputsegmentlast)$)
        -- (\tikzinputsegmentlast);
    }
  }, decorate
}}

\usepackage[scr=boondox,scrscaled=1.05]{mathalfa}

\renewcommand{\geq}{\geqslant}
\renewcommand{\leq}{\leqslant}

\usepackage{fontawesome5}

\newcommand{\mis}{\textsc{Max Independent Set}\xspace}

\newcommand{\smis}{\textsc{MIS}\xspace}

\newcommand{\vbl}{\operatorname{vbl}}

\theoremstyle{definition}

\newcommand{\Oh}{\mathcal O}

\newcommand\abs[1]{\lvert #1\rvert}

\begin{document}

\maketitle

\begin{abstract}
  For every $\varepsilon > 0$, it is NP-hard to $n^{1-\varepsilon}$-approximate \textsc{Max Independent Set} in $n$-vertex graphs [H{\aa}stad '96, Zuckerman '07].
  In triangle-free graphs, a~simple argument gives a~polynomial-time $n^{\frac{1}{2}}$-approximation algorithm, whereas, for every $\varepsilon > 0$, an $n^{\frac{1}{4}-\varepsilon}$-approximation algorithm would imply that $\mathrm{NP}\subseteq\mathrm{BPP}$ [Bonnet, Thomassé, Tran, Watrigant; ESA '20].

  In this note, we close this gap by proving the corresponding hardness against $n^{\frac{1}{2}-\varepsilon}$-approximation algorithms.
  The reduction is very simple and uses the Moser--Tardos resampling algorithm to make the constructed graphs triangle-free.
  The soundness uses a~result of Haeupler, Saha, and Srinivasan building on the proof of Moser and Tardos, to upper-bound the probability that a~fixed relatively large subset is an independent set after the Moser--Tardos algorithm terminates. 

  We generalize this scheme and show that, for any nonempty finite family $\mathcal F$ of graphs, each containing at~least one cycle, for any $\varepsilon > 0$, an $n^{\mu(\mathcal F)-\varepsilon}$-approximation algorithm for \textsc{Max Independent Set} in graphs excluding every member of~$\mathcal F$ as a~subgraph implies that $\mathrm{NP}\subseteq\mathrm{BPP}$, where
  \[
  \mu(\mathcal F):=1 - \max_{H \in\mathcal F}\ \min_{\substack{U\subseteq V(H)\\ H[U]\text{ contains a~cycle}}}\frac{\abs{U}-2}{\abs{E(H[U])}-1}.
\]
\end{abstract}

\section{Introduction}\label{sec:intro}

Given a~simple graph~$G$, the \mis problem (or \smis for short) asks for a~largest \emph{independent set} in~$G$, that is, a~subset of vertices that are pairwise nonadjacent.
This problem is notoriously inapproximable: For every $\varepsilon > 0$, it is NP-hard to \mbox{$n^{1-\varepsilon}$}-approximate in $n$-vertex graphs~\cite{Hastad96,Zuckerman07}.

It is conjectured in~\cite{Bonnet20} that the class of all graphs is the only \emph{hereditary} (i.e., closed under taking induced subgraphs) graph class on which \textsc{Max Independent Set} is that inapproximable.
More formally, the conjecture is that for every graph $H$, there exists a~constant $\delta_H \in [0,1)$ such that \smis admits a~(randomized) polynomial-time $\Oh(n^{\delta_H})$-approximation algorithm in $H$-free graphs (i.e., the graphs excluding $H$ as an induced subgraph).
This is shown to be a~weakening of an algorithmic version of the celebrated Erd\H{o}s--Hajnal conjecture.

We are here interested in tight approximability of \smis in hereditary classes, starting with $H$-free classes and $\mathcal F$-free classes (i.e., excluding every graph in~$\mathcal F$ as an induced subgraph) for a~finite family $\mathcal F$.
More specifically, we wish to match an $\Oh(n^{\delta_H})$-approximation algorithm with the conditional lower bound that, for any constant $\varepsilon > 0$, an $\Oh(n^{\delta_H-\varepsilon})$-approximation algorithm contradicts some standard complexity-theoretic assumption.
Prior to our work, no such result was known for any graph~$H$.

For triangle-free graphs (that is, when $H$ is a~triangle), there is a~straightforward $\Oh(n^{\frac{1}{2}})$-approximation algorithm.
If the input graph has a~vertex of degree at~least $n^{\frac{1}{2}}$, its neighborhood, which by triangle-freeness is an independent set, yields the desired approximate solution.
Otherwise the graph has maximum degree at~most~$n^{\frac{1}{2}}$, and every maximal independent set is large enough.
On the complexity side, it was shown that, for every $\varepsilon>0$, a~polynomial-time $n^{\frac{1}{4}-\varepsilon}$-approximation algorithm would imply that $\mathrm{NP} \subseteq \mathrm{BPP}$~\cite{Bonnet20}.
We close this gap and provide the first example of polynomial-factor tightness for \smis in $H$-free graphs.

\begin{theorem}\label{thm:main}
  For every constant $\varepsilon>0$, there is no polynomial-time $n^{\frac{1}{2}-\varepsilon}$-approximation algorithm for \mis on $n$-vertex triangle-free graphs unless $\mathrm{NP} \subseteq\mathrm{BPP}$.
\end{theorem}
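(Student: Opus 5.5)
We reduce from the $n^{1-\varepsilon}$-hardness of \smis in general graphs~\cite{Hastad96,Zuckerman07}, in its gap form. Given a graph $G$ on $N$ vertices, it is NP-hard to distinguish $\alpha(G)\geq N^{1-\delta}$ from $\alpha(G)\leq N^{\delta}$, for any fixed $\delta>0$. We blow up every vertex $v$ of $G$ into an independent set $B_v$ of size $t=N^{c}$ for a large constant $c$. Independently for each edge $uv$ of $G$, we put each pair of $B_u\times B_v$ as an edge with probability $p$. Here $p$ is chosen just below the triangle threshold, $p\approx (Nt)^{-1/2}$ up to an $n^{o(1)}$ factor. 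The resulting graph is a random subgraph $G'$ of the blow-up, with $n=Nt$ vertices. It has triangles, but only $\Theta(n^3p^3)$ of them in expectation, a small fraction of the $n^2p$ edges. To make it triangle-free we run the Moser--Tardos resampling algorithm. The variables are the edge indicators, and the bad events are ``the three edges of a triangle are all present''. Each bad event has probability $p^3$ and shares variables with about $3np$ others, so the Lovász Local Lemma condition $e\cdot p^3\cdot 3np^{\cdot}\ldots\leq 1$ holds comfortably once $p\leq n^{-1/2-o(1)}$; indeed $p^3\cdot np\ll 1$ already. Moser--Tardos then terminates in expected polynomial time with a triangle-free output $G''$, giving a randomized (BPP) reduction.

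\textbf{Completeness} is deterministic. If $S$ is independent in $G$, then $\bigcup_{v\in S}B_v$ is independent in every $G''$, since no edges are ever sampled inside it. Thus $\alpha(G'')\geq N^{1-\delta}t$.

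\textbf{Soundness.} Suppose $\alpha(G)\leq N^\delta$. Any set $X$ of size $k$ in $G''$ meets at most $N^{\delta}$ blobs $B_v$ forming an independent set in $G$. If $X$ hits many blobs, then $X$ contains about $k^2/N^{2\delta}$ pairs of vertices lying in blobs adjacent in $G$. For instance, greedily, if $X$ meets $m$ blobs substantially, Turán's bound on $G[\{v: B_v\cap X\neq\emptyset\}]$ gives many adjacent blob pairs. Each such pair is a potential edge of the random graph, and $X$ is independent only if all of them are absent.

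Before resampling, this has probability $(1-p)^{\Omega(k^2N^{-2\delta})}\leq \exp(-\Omega(pk^2N^{-2\delta}))$. To transfer this bound to the Moser--Tardos output, we invoke the result of Haeupler, Saha, and Srinivasan. It bounds the probability of an event $E$ in the output distribution by $\Pr[E]$ times a product over the bad events that depend on $E$'s variables, of factors $1+x_A$ or similar. Here $E$ is the decreasing event ``no edge inside $X$''. The key observation is that conditioning on absent edges only helps avoid triangles, so the inflation factor is at most $\exp(O(\#\text{triangles touching the relevant pairs}\cdot p^3))\approx\exp(O(k^2\cdot np^3))$. This is negligible compared to $\exp(-pk^2N^{-2\delta})$ because $np^2\leq n^{-o(1)}\ll N^{-2\delta}$; we tune the $n^{o(1)}$ slack in $p$ against $\delta$.

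A union bound over the at most $n^k$ sets of size $k$ then requires $pk^2N^{-2\delta}\gg k\log n$, that is, $k\gtrsim n^{1/2+o(1)}N^{2\delta}$. So w.h.p.\ $\alpha(G'')\leq n^{1/2}N^{O(\delta)}$, whereas completeness gives $\alpha(G'')\geq n\cdot N^{-\delta}$. The gap is $n^{1/2}/N^{O(\delta)}=n^{1/2-\varepsilon}$ once $c$ is large relative to $1/\delta$. Sets that concentrate inside few blobs are handled separately: such a set lies within at most $N^\delta$ blobs forming an independent set, so it is trivially of size at most $N^{\delta}t$. A combinatorial split handles the mixture of the two cases. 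We can also add a polynomial-time derandomized simulation, but BPP suffices.

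\textbf{Main obstacle.} The main difficulty is the Haeupler--Saha--Srinivasan step. We must bound the inflation factor for the event ``$X$ independent'' tightly enough that it does not eat the $\exp(-pk^2)$ decay. This requires counting the bad events (triangles) sharing a variable with the $\Theta(k^2)$ pairs of $X$, and checking that the sum of their $x_A\approx ep^3$ is $o(pk^2/\log n)$. I would first try the crude bound $\prod(1+x_A)\leq\exp(\sum x_A)$ with $\sum x_A\leq k^2\cdot n\cdot O(p^3)$ and verify the exponents. The gap-amplification bookkeeping for sets spread unevenly over blobs is secondary.
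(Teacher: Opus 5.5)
There is a genuine gap: your parameter choice $t=N^{c}$ with $c$ large relative to $1/\delta$ breaks soundness. No edge is ever sampled inside a blob, so every $B_v$ is independent in $G''$. Hence even in the NO case $\alpha(G'')\geq \alpha(G)\,t\geq t=N^{c}$. Your claimed soundness bound is $n^{1/2}N^{O(\delta)}=N^{(1+c)/2+O(\delta)}$, which is smaller than $N^{c}$ once $c>1+O(\delta)$.

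More bluntly, the YES case has $\alpha(G'')\leq n=Nt$ and the NO case has $\alpha(G'')\geq t$. So no analysis can give a gap larger than $N=n^{1/(1+c)}$, which is $n^{1/2-\Omega(1)}$ for any constant $c>1$. With $c$ large it is far below $n^{1/2-\varepsilon}$.

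The step that actually fails is your union bound at $k\approx n^{1/2+o(1)}N^{2\delta}$. For large $c$ this $k$ is below $tN^{\delta}$, and at such sizes the edge-count claim $m(X)\gtrsim k^2N^{-2\delta}$ is false: take $X$ inside a single blob. You observed that concentrated sets can have size $N^{\delta}t$, but you never checked that this stays below your threshold.

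The fix is to take $t\leq N$. The paper uses $t=N$, so $n=N^2$ and $p=\Theta(1/N)=\Theta(n^{-1/2})$. Then $N^{O(\delta)}=n^{O(\delta)}$, and you make this loss small by choosing $\delta=\Theta(\varepsilon)$, not by making $c$ large.

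Even with that fix, two steps need tightening.

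\emph{The weighted Tur\'an step.} Your ``combinatorial split'' for sets spread unevenly over blobs is exactly a weighted Tur\'an bound. Plain Tur\'an on the set of hit blobs counts blob pairs, not vertex pairs, so it does not suffice. The paper gets the bound from Motzkin--Straus applied to the complement: $m(X)\geq \frac{k^2}{2\alpha(G)}-\frac{tk}{2}$, hence $m(X)\geq\frac{k^2}{4\alpha(G)}$ whenever $k\geq 2t\,\alpha(G)$, where $m(X)$ counts blow-up edges inside $X$. Dyadic bucketing of blobs by $|X\cap B_v|$ would also work, with a polylogarithmic loss.

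\emph{The slack in $p$.} The $n^{o(1)}$ slack is unnecessary. Each edge inside $X$ lies in fewer than $n$ triangles of the blow-up, so the Haeupler--Saha--Srinivasan inflation is at most $\exp(O(m(X)\,n\,p^3))$. Compare this against the decay $\exp(-p\,m(X))$ with the same $m(X)$, rather than $k^2$ against $k^2N^{-2\delta}$. Then $np^2$ below a small constant suffices.

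A minor slip: a bad event shares variables with about $3n$ others, not $3np$, since bad events range over all triangles of the blow-up. The local lemma condition $np^3\ll 1$ still holds.
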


The reduction behind~\cref{thm:main} starts with a~hard general \smis-instance~$H$, blows every vertex into an independent set of size~$n$, and independently samples (i.e., includes) every edge of the intermediate blow-up $F$ with probability $\Theta(1/n)$.
While the current graph contains a triangle, the algorithm resamples the three variables corresponding to its edges, thereby following the Moser--Tardos algorithm.
In expectation, this process terminates after $n^{\Oh(1)}$ steps, by the work of Moser and Tardos~\cite[Theorem~1.2]{Moser10} (see~\cref{thm:moser-tardos}).
The obtained graph $G$ has $n^2$ vertices and is, by design, triangle-free.

It is immediate that $\alpha(G) \geqslant \alpha(H) \cdot n$.
On the other hand, we show that, with high probability, $\alpha(G) = \Oh(\alpha(H) \cdot n \log n)$. 
This uses a~follow-up work by Haeupler, Saha, and Srinivasan~\cite[Theorem~2.2]{Haeupler11} (see~\cref{thm:hss}), building on~\cite{Moser10}, in order to upper-bound the probability, for any fixed set $S$ of $\Omega(\alpha(H) \cdot n \log n)$ vertices of~$G$, that $S$ is an independent set in~$G$.
We then conclude with a~union bound over all sets of $\Theta(\alpha(H) \cdot n \log n)$ vertices of~$G$.
Finally, the inapproximability of \smis by H{\aa}stad~\cite{Hastad96} and Zuckerman~\cite{Zuckerman07} (see~\cref{thm:zuckerman}) yields~\cref{thm:main}.

\medskip

We believe that the elementary proof of \cref{thm:main} is an accessible application of the Moser--Tardos algorithm/theorem, which can be taught within various related courses (approximation algorithms, randomness in computer science, graph independent sets, etc.).    
We decided that the current paper would primarily present this particular and simpler case, but the proof scheme naturally generalizes to $\mathcal F$-free graphs.
Besides, \cref{thm:main} remains the only example of polynomial-factor tight approximability for \smis in graphs excluding finitely many induced subgraphs.

For every nonempty finite family $\mathcal F$ of graphs, each with a~cycle, we set
\[
  \mu(\mathcal F):=1 - \max_{H \in\mathcal F}\ \min_{\substack{U\subseteq V(H)\\ H[U]\text{ contains a~cycle}}}\frac{\abs{U}-2}{\abs{E(H[U])}-1}.
\]
The proof technique sketched after \cref{thm:main} works for $\mathcal F$-free graphs, and even for \mbox{$\mathcal F$}-subgraph-free graphs, i.e., those excluding every graph of~$\mathcal F$ as a~mere subgraph.
We obtain the following theorem.

\begin{theorem}\label{thm:generalization}
  Let $\mathcal F$ be a nonempty finite family of graphs, each containing a~cycle.
  For every constant $\varepsilon>0$, there is no polynomial-time $n^{\mu(\mathcal F)-\varepsilon}$-approximation algorithm for \mis on $n$-vertex $\mathcal F$-subgraph-free graphs unless $\mathrm{NP}\subseteq\mathrm{BPP}$.
\end{theorem}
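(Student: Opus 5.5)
We follow the scheme sketched for \cref{thm:main}, with the edge probability tuned to the family $\mathcal F$. For each $H\in\mathcal F$, let $U_H\subseteq V(H)$ be a minimizer of $\frac{\abs{U}-2}{\abs{E(H[U])}-1}$ among sets inducing a subgraph with a cycle. Set $F_H:=H[U_H]$, $v_H:=\abs{U_H}$, $e_H:=\abs{E(F_H)}$, and $\beta:=\max_H \frac{v_H-2}{e_H-1}=1-\mu(\mathcal F)$. Since $F_H$ contains a cycle, $e_H\geq v_H$, hence $\beta<1$. It suffices to destroy every copy of each $F_H$: a graph with no subgraph isomorphic to $F_H$ has no subgraph isomorphic to $H$.

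\textbf{Reduction.} Start from a hard instance $H_0$ on $N$ vertices given by \cref{thm:zuckerman}. Blow up every vertex into an independent set of size $n$, giving $n$-vertex-classes and the complete blow-up $B$ on $Nn$ vertices (take $n=N^{c}$ for a large constant $c$). Include each edge of $B$ independently with probability $p=n^{-\beta}/K$ for a large constant $K$. The bad events are indexed by the subgraphs of $B$ isomorphic to some $F_H$: such an event occurs when all $e_H$ of its edges are present. Run the Moser--Tardos resampling until no bad event occurs.

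\textbf{Lopsided LLL condition.} We verify the condition of \cref{thm:moser-tardos} with weights $x_A=2\Pr[A]$. Fix an edge $uv$ of $B$. The number of copies of $F_H$ through $uv$ is $O((Nn)^{v_H-2})$. Their total probability mass is therefore $O(N^{v_H}n^{v_H-2}p^{e_H})$. Because $\beta\geq\frac{v_H-2}{e_H-1}$, we have $p^{e_H-1}n^{v_H-2}\leq K^{-(e_H-1)}$, so this mass is $O(N^{O(1)}K^{-1}p)$.

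Here a real obstacle appears: the factor $N^{O(1)}$. It must be absorbed by working with a slightly smaller $p$, namely $p=n^{-\beta-\delta}$ with $n$ a large polynomial in $N$. That choice makes the $N$ factors negligible and costs only $n^{\delta}$ in the final ratio. One also needs $\beta$ to bound the densest \emph{sub}-structure, and the minimization over $U$ in $\mu$ guarantees this. Each event $A$ then shares variables with events of total weight $O(e_H\cdot o(p))$, and $\Pr[A]\leq p^{e_H}$, so the LLL condition holds. Hence the algorithm terminates in expected polynomial time, and the output $G$ is $\mathcal F$-subgraph-free.

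\textbf{Completeness and soundness.} Completeness is immediate: $\alpha(G)\geq n\,\alpha(H_0)$.

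For soundness, fix $S$ of size $s=C\alpha(H_0)\,n^{\beta+\delta}\log(Nn)$. Since $H_0[\pi(S)]$ has independence number at most $\alpha(H_0)$, a Turán/greedy argument on the classes met by $S$ shows that $S$ spans $\Omega(s^2/\alpha(H_0))$ edges of $B$. (Balance is obtained by restricting to classes containing many vertices of $S$; this is routine, up to log factors.) By \cref{thm:hss}, the probability that none of these edges is present in the final output is at most the product of $(1-p)$ over the edges, inflated by the LLL weights, which is $\exp(-\Omega(p s^2/\alpha(H_0)))$. The main obstacle is controlling this inflation, since the event ``edge absent'' is monotone decreasing, the opposite direction from the bad events. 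The plan is to use the Haeupler--Saha--Srinivasan bound on the output distribution, as in the triangle case. The union bound over $\binom{Nn}{s}\leq e^{s\log(Nn)}$ sets succeeds once $ps/\alpha(H_0)\gg\log(Nn)$, which our choice of $s$ ensures. Thus w.h.p.\ $\alpha(G)\leq \alpha(H_0)\,n^{\beta+\delta}\,\mathrm{polylog}$.

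\textbf{Gap.} The ratio of completeness to soundness passes from $N^{1-\varepsilon'}$ to $N^{1-\varepsilon'}\,n^{1-\beta-\delta}$ on a graph with $Nn$ vertices. Letting $n=N^{c}$ with $c\to\infty$ and $\delta,\varepsilon'\to0$, the gap becomes $(Nn)^{\mu(\mathcal F)-\varepsilon}$. An approximation algorithm beating this would therefore decide the hard instances of \cref{thm:zuckerman} with randomized polynomial-time reductions, giving $\mathrm{NP}\subseteq\mathrm{BPP}$.
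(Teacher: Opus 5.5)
\textbf{There is a genuine gap in your soundness and gap steps, and it comes from how you parametrize $p$ and the class size.}

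Your concluded bound $\alpha(G)\le\alpha(H_0)\,n^{\beta+\delta}\,\mathrm{polylog}$ contradicts your own completeness bound $\alpha(G)\ge n\,\alpha(H_0)$, because $\beta+\delta<1$.

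The faulty step is the claim that $S$ spans $\Omega(s^2/\alpha(H_0))$ edges of $B$. If $s\le\alpha(H_0)\,n$, then $S$ can lie inside $\bigcup_{v\in I}I_v$ for a maximum independent set $I$ of $H_0$ and span no edge at all. No ``balancing up to log factors'' repairs this. The Motzkin--Straus computation only gives $m(S)\ge\frac{s^2}{2\alpha(H_0)}-\frac{ns}{2}$, which is useful only when $s\ge 2\alpha(H_0)n$. So the soundness threshold is of order $\alpha(H_0)\max\{n,\,p^{-1}\ln(Nn)\}$.

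Since the NO-case value is always at least $n\,\alpha(H_0)$, this reduction can never certify a gap larger than the gap $N^{1-\varepsilon'}$ of $H_0$ itself. There is no extra factor $n^{1-\beta-\delta}$. With $p^{-1}=n^{\beta+\delta}\ll n$ and $n=N^c$, what you actually obtain is a gap of about $N^{1-\varepsilon'}=(Nn)^{(1-\varepsilon')/(1+c)}$, whose exponent tends to $0$ as $c\to\infty$. Letting the classes grow is exactly the wrong direction.

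\textbf{The missing idea.} You must keep the gap of $H_0$ while making the total vertex count as small as possible, and that means balancing $p^{-1}\approx n$.

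Your $N^{O(1)}$ obstacle in the LLL is an artifact of measuring $p$ against the class size instead of the total vertex count. A fixed edge lies in at most $2e_H(Nn)^{v_H-2}$ copies of $F_H$. So take $p=c_0(Nn)^{-\beta}$ with a small constant $c_0=c_0(\mathcal F)$. Then $(Nn)^{v_H-2}p^{e_H-1}\le c_0^{e_H-1}$, and the LLL condition holds with no loss and no $\delta$.

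Next choose the class size $n=\lceil N^{(1-\mu)/\mu}\rceil$, so that $Nn=\Theta(N^{1/\mu})$ and $p^{-1}=\Theta(n)$. The threshold $s_0=\lceil 32\,\alpha(H_0)\,p^{-1}\ln(Nn)\rceil$ then exceeds $2\alpha(H_0)n$, so the edge-count lemma applies. The gap becomes $N^{1-\varepsilon'}/\ln(Nn)=(Nn)^{\mu(1-\varepsilon')-o(1)}$. This is precisely the paper's choice, with the names swapped: class size $\ell=\lceil n^{(1-\mu)/\mu}\rceil$ and $p=c\,(n\ell)^{-(1-\mu)}$.

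Finally, the monotonicity of the event ``no edge of $F[S]$ is present'' is not an obstacle. The Haeupler--Saha--Srinivasan bound applies to any event. The $\lambda$-weight of bad events through a fixed edge is at most $p/4$, so the inflation factor is at most $\exp(p\,m(S)/2)$. This leaves $\Pr[\mathcal J_S]\le\exp(-p\,m(S)/2)$, which is enough for the union bound.
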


This theorem has some interesting consequences.
For $\mathcal F=\{K_t\}$, where $K_t$ is the $t$-vertex clique and $t \geqslant 3$,
\[\mu(\mathcal F) = 1 - \frac{t-2}{\binom{t}{2}-1} =  1 - \frac{t-2}{\frac{t(t-1)}{2}-1} =  1 - \frac{t-2}{\frac{(t-2)(t+1)}{2}}=1-\frac{2}{t+1}=\frac{t-1}{t+1}.\]
On the positive side, \smis admits an~$n^{\frac{t-2}{t-1}}$-approximation algorithm in $K_t$-free graphs~\cite{Bonnet20}.
For every integer $t \geqslant 4$, there is still a~gap between the hardness exponent $\frac{t-1}{t+1} - \varepsilon$ and the achieved exponent $\frac{t-2}{t-1}$; note that for $t=3$, the triangle-free case, it happens that $\frac{t-1}{t+1} = \frac{t-2}{t-1}$.

For $\mathcal F=\{K_{s,t}\}$, where $K_{s,t}$ is the biclique with sides of size $s$ and $t$, with $2 \leqslant s \leqslant t$,
\[\mu(\mathcal F) = 1 - \frac{s+t-2}{st-1} = \frac{st-s-t+1}{st-1} = \frac{(s-1)(t-1)}{st-1} = \frac{s-1}{s} \cdot \frac{t-1}{t-\frac{1}{s}}.\]
This nearly matches (in fact, asymptotically matches when $t$ goes to infinity) the existing $\Oh_{s,t}(n^{\frac{s-1}{s}})$-approximation algorithms \cite{Dvorak23,Bonnet20}.

For $\mathcal F=\{C_3, C_4, \ldots, C_{\gamma-1}\}$ where $\gamma \geqslant 4$ and $C_\ell$ is the $\ell$-vertex cycle, the $\mathcal F$-free graphs are precisely the graphs of girth at~least~$\gamma$, and
\[\mu(\mathcal F) = 1 - \frac{\gamma-3}{\gamma-2} = \frac{1}{\gamma-2}.\] 
Thus, \cref{thm:generalization} improves the hardness exponent $\frac{1}{2\gamma-2}-\varepsilon$ shown in~\cite{Bonnet20} for graphs of girth at~least~$\gamma$, but does not settle the approximability of \smis on these classes.
Indeed, the exponent of the best approximation factor is $\frac{2}{\gamma-1}$ when $\gamma$ is odd, and $\frac{2}{\gamma}$ when $\gamma$ is even~\cite{Monien85,Murphy92}. 

We leave the closing of these gaps as open problems.

\section{Preliminaries}\label{sec:prelim}

We denote by $V(G)$ and $E(G)$ the vertex and edge sets, respectively, of a~graph~$G$.
For every $X \subseteq V(G)$, $G[X]$ is the subgraph of~$G$ induced by~$X$, i.e., the graph obtained from~$G$ after removing all the vertices not in~$X$.
We denote by $\alpha(G)$ the \emph{independence number} of~$G$, that is, the largest cardinality of an \emph{independent set} of~$G$.

\medskip

Following the notation in~\cite{Moser10,Haeupler11}, we let $\mathcal P$ be a finite family of mutually independent random variables, and $\mathcal{A}$ be a~finite family of events, informally referred to as ``bad'' events, determined by these variables.
For any event $B$ (not necessarily in~$\mathcal A$) determined by $\mathcal P$, let $\vbl(B)$ be the smallest subset of~$\mathcal P$ determining $B$, and define
\[ \Gamma_{\mathcal A}(B) := \bigl\{A\in\mathcal A\setminus\{B\} \mid \vbl(A)\cap\vbl(B)\neq\varnothing \bigr\}. \]
When $\mathcal A$ is clear from the context, we simply write $\Gamma(B)$.

The Moser--Tardos resampling algorithm proceeds as follows.
Initially sample all variables in $\mathcal P$ independently according to their respective distributions.
While some event $A \in \mathcal A$ holds, resample all variables in $\vbl(A)$.

\begin{theorem}[{\cite[Theorem~1.2]{Moser10}}]\label{thm:moser-tardos}
  Let $\lambda \colon\mathcal A \to (0,1)$ be a~map such that for every $A \in \mathcal A$,
  \begin{equation}\label{eq:mt}
    \Pr[A] \leq \lambda(A) \prod_{B \in\Gamma(A)} \bigl (1-\lambda(B)\bigr).
  \end{equation}
  Then the expected number of resampling steps of the Moser--Tardos algorithm is at most \[\sum_{A\in\mathcal A}\frac{\lambda(A)}{1-\lambda(A)}.\]
\end{theorem}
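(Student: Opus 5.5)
This is the algorithmic Lov\'asz Local Lemma, and we would prove it by the witness-tree (``entropy-free'') argument of Moser and Tardos. We fix in advance, for each variable $P\in\mathcal P$, an infinite sequence of independent samples (the \emph{resampling table}). The algorithm takes the $j$-th value of~$P$ the $j$-th time it (re)samples~$P$. This couples every run with a single random object, and every event we consider becomes an event about this table. We record the \emph{log} $C(1),C(2),\ldots$, where $C(t)\in\mathcal A$ is the event resampled at step~$t$. It then suffices to show that, for each $A\in\mathcal A$, the expected number $N_A$ of occurrences of~$A$ in the log is at most $\lambda(A)/(1-\lambda(A))$. Summing over $A$ then gives the bound of \cref{thm:moser-tardos}.

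To each step~$t$ we associate a \emph{witness tree} $\tau_t$, built backwards from time~$t$ as follows. The root is labelled $C(t)$. For $s=t-1,\dots,1$, if $C(s)$ lies in $\Gamma^+(B):=\Gamma(B)\cup\{B\}$ for the label~$B$ of some current node, we attach a new node labelled $C(s)$ as a child of a deepest such node; otherwise we skip~$s$. We would establish three facts.
\begin{enumerate}[(i)]
\item Distinct steps with $C(t)=A$ give distinct trees rooted at~$A$, since the tree for the $k$-th occurrence of~$A$ contains exactly $k$ nodes labelled~$A$.
\item The children of any node have labels in~$\Gamma^+$ of its label, and siblings have pairwise distinct labels; call such trees \emph{proper}.
\item For a fixed proper tree~$\tau$, $\Pr[\tau \text{ appears in the log}]\le\prod_{v\in V(\tau)}\Pr[[v]]$, where $[v]$ is the label of~$v$.
\end{enumerate}
The key step is~(iii). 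We show that whenever $\tau$ appears, the $\tau$-check holds: process the nodes in reverse breadth-first order by depth, and for each node $v$ evaluate $[v]$ on a value of each variable $P\in\vbl([v])$ from the table, namely the value whose index is one more than the number of nodes below~$v$ whose labels depend on~$P$. The construction rule (deepest attachment) guarantees that this is exactly the assignment present when the algorithm resampled $[v]$. The $\tau$-check uses, for distinct nodes, disjoint table entries, so it succeeds with probability exactly $\prod_v\Pr[[v]]$.

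It remains to bound $\mathbb E[N_A]\le\sum_{\tau}\prod_{v}\Pr[[v]]$, summed over proper trees rooted at~$A$. We would use a Galton--Watson process. Start with a root labelled~$A$. A node labelled~$B$ independently gets, for each $B'\in\Gamma^+(B)$, a child labelled~$B'$ with probability $\lambda(B')$. A direct computation shows that a given proper tree~$\tau$ is produced with probability
\[\frac{1-\lambda(A)}{\lambda(A)}\prod_{v\in V(\tau)} \lambda'([v]), \qquad \text{where } \lambda'(B):=\lambda(B)\prod_{B'\in\Gamma(B)}\bigl(1-\lambda(B')\bigr).\]
Hypothesis~\eqref{eq:mt} says $\Pr[B]\le\lambda'(B)$. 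Hence
\[\sum_\tau\prod_v\Pr[[v]]\le\frac{\lambda(A)}{1-\lambda(A)}\sum_\tau\Pr[\text{GW yields }\tau]\le\frac{\lambda(A)}{1-\lambda(A)}.\]

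We expect the main obstacle to be step~(iii): making precise that the table entries read by the $\tau$-check coincide with the values the algorithm actually used, which needs a careful analysis of the deepest-attachment rule. The branching-process identity is a routine, if slightly fiddly, product computation, using the fact that siblings carry distinct labels.
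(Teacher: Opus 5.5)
Your outline is correct and is exactly the original Moser--Tardos argument (resampling table, deepest-attachment witness trees, the $\tau$-check coupling, and the Galton--Watson comparison). The paper gives no proof of its own and simply imports the result from \cite{Moser10}. In step (iii), state carefully that ``nodes below $v$'' means nodes of strictly greater depth than $v$, not descendants of $v$: a tie in the deepest-attachment rule can hang a node whose label depends on $P\in\vbl([v])$ under another node at the same depth as $v$.
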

We will also need to bound the probability that an event outside~$\mathcal A$ holds at some point during the execution of the Moser--Tardos algorithm.
Haeupler, Saha, and Srinivasan observe that the following can be derived from the proof of Moser and Tardos.

\begin{theorem}[{\cite[Theorem~2.2]{Haeupler11}}]\label{thm:hss}
  Assume $\mathcal A, \lambda$ satisfy the hypotheses of Theorem~\ref{thm:moser-tardos}.
  Let $B$ be any event determined by the variables in $\mathcal P$.
  The probability that $B$ is true at least once during the execution of the Moser--Tardos algorithm is at most
  \begin{equation}\label{eq:hss}
    \Pr[B] \prod_{A\in\Gamma(B)} \bigl(1-\lambda(A)\bigr)^{-1}.
  \end{equation}
\end{theorem}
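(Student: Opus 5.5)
Theorem~\ref{thm:hss} is quoted from~\cite{Haeupler11}, so we only indicate how it follows from the witness-tree argument behind Theorem~\ref{thm:moser-tardos}. As in~\cite{Moser10}, fix a \emph{resampling table}: for each variable $P\in\mathcal P$, an infinite sequence of independent samples of $P$, where the $j$-th (re)sampling of $P$ takes the $j$-th entry. Let $C_1,C_2,\dots$ be the log of events resampled by the algorithm. If $B$ holds initially, this happens with probability $\Pr[B]$. In general, suppose $B$ holds for the first time right after step $t$ (step $0$ being the initial sampling). We then build a witness tree $\tau$ backwards from time $t$. The root is labeled~$B$. For $s=t,t-1,\dots,1$, if $C_s$ shares a variable with the label of some current node, we attach a new node labeled $C_s$ as a child of a deepest such node; otherwise we discard $C_s$.

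\textbf{Step 1: structure of the tree.} The children of the root have labels in $\Gamma(B)$ and these labels are pairwise distinct. Indeed, two such events share a variable, so the later-attached one would hang below the earlier one rather than at the root. For the same reason, every subtree rooted at a child labeled $A$ is a proper witness tree in the sense of~\cite{Moser10}: children have distinct labels in $\Gamma(A)\cup\{A\}$.

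\textbf{Step 2: probability that a fixed tree occurs.} This is the coupling lemma of Moser and Tardos, applied with a root labeled by an event outside~$\mathcal A$. Each node's label is checked against a determined, distinct block of entries of the resampling table: the entries indexed by how many earlier nodes in $\tau$ involve each variable. Hence the events ``the label of node $v$ holds on its block'' are independent. The root requires $B$ to hold, which is exactly why $t$ was chosen. So a fixed tree $\tau$ occurs with probability at most $\Pr[B]\prod_{v\neq\text{root}}\Pr[[v]]$, where $[v]$ denotes the label of~$v$. This is the step that needs the most care: one must check that $B\notin\mathcal A$, and the fact that $B$ is never resampled, do not break the block-indexing argument. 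They do not, since the root's variables are read at ``time $t$'', i.e.\ at the current counters.

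\textbf{Step 3: summing over trees.} The probability that $B$ ever holds is at most the sum, over all possible $\tau$, of the bound from Step~2. Group the trees by the set $S\subseteq\Gamma(B)$ of root-children labels. The subtrees below the children range over proper witness trees, so the weight of each group is at most
\[
\Pr[B]\prod_{A\in S} W(A), \qquad W(A):=\sum_{\tau'\text{ proper, root } A}\ \prod_{v\in\tau'}\Pr[[v]].
\]
Condition~\eqref{eq:mt} and the Galton--Watson argument of~\cite{Moser10} give $W(A)\le \lambda(A)/(1-\lambda(A))$. Summing over all $S\subseteq\Gamma(B)$ then yields
\[
\Pr[B]\prod_{A\in\Gamma(B)}\Bigl(1+\tfrac{\lambda(A)}{1-\lambda(A)}\Bigr)=\Pr[B]\prod_{A\in\Gamma(B)}\bigl(1-\lambda(A)\bigr)^{-1},
\]
which is exactly~\eqref{eq:hss}. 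The initial case $t=0$ corresponds to the single-node tree, i.e.\ $S=\varnothing$, and is already counted.
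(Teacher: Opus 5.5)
Your sketch is correct and is essentially the argument behind the cited result. The paper gives no proof of its own; it only notes that Haeupler, Saha, and Srinivasan derive the bound from the Moser--Tardos proof. That derivation consists of the ingredients you use: a witness tree rooted at $B$ built from the first time $B$ holds, the coupling bound $\Pr[B]\prod_{v\neq\text{root}}\Pr[[v]]$, and a summation over trees whose root has children only in $\Gamma(B)$. Your factorization over root-children label sets via $W(A)\leq\lambda(A)/(1-\lambda(A))$ is simply one way to carry out that sum.

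One point to add: the assumption $B\notin\mathcal A$ that you invoke in Step 2 is not needed. Since $t$ is the first time $B$ holds, $B$ is never resampled before $t$, so the root's children lie in $\Gamma(B)$ even when $B\in\mathcal A$.
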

A fortiori, the probability that $B$ holds when the algorithm terminates is at most the quantity in~\eqref{eq:hss}.
Finally, we will rely on the inapproximability of \textsc{Max Independent Set} in general graphs.

\begin{theorem}[promise version of {\cite[Theorem~1.1]{Zuckerman07} on complements}]\label{thm:zuckerman}
  For every constant $\delta\in(0,1/2)$, it is NP-hard to distinguish, given an $n$-vertex graph $H$, between the following two cases:
  $\alpha(H)\leq n^\delta~\text{and}~\alpha(H)\geq n^{1-\delta}$.
\end{theorem}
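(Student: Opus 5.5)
The plan is to obtain this statement as a direct consequence of the gap-producing reduction underlying \cite[Theorem~1.1]{Zuckerman07}, followed by complementation. The natural first reading of that theorem, ``it is NP-hard to approximate \textsc{Max Clique} within $n^{1-\varepsilon}$'', is too weak for our purpose. A bare inapproximability statement of that form does not by itself give a promise problem with both thresholds $n^{\delta}$ and $n^{1-\delta}$. So the first step is to go back to how Zuckerman proves it.

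The proof there is a deterministic polynomial-time Karp reduction from an NP-complete problem, say \textsc{Sat}, to graphs. It is built from a PCP verifier with small free-bit complexity, an FGLSS-style graph, and a disperser-based derandomization of H{\aa}stad's randomized amplification. For every constant $\varepsilon>0$, this reduction maps an $N$-variable formula $\varphi$ to an $n$-vertex graph $H'$ with $n=N^{\Oh(1)}$ such that
\[ \varphi \text{ satisfiable} \Rightarrow \omega(H')\geq n^{1-\varepsilon}, \qquad \varphi \text{ unsatisfiable} \Rightarrow \omega(H')\leq n^{\varepsilon}. \]
Given $\delta\in(0,1/2)$, I will apply this with $\varepsilon:=\delta$.

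The second step is complementation. I output $H:=\overline{H'}$, which has the same number $n$ of vertices. Cliques of $H'$ are exactly the independent sets of $H$, so $\alpha(H)=\omega(H')$. Satisfiable formulas therefore go to instances with $\alpha(H)\geq n^{1-\delta}$, and unsatisfiable ones to instances with $\alpha(H)\leq n^{\delta}$. Since $\delta<1/2$, we have $n^{\delta}<n^{1-\delta}$ for $n\geq 2$, so the two cases are disjoint and the promise problem is well defined. Any algorithm distinguishing them decides \textsc{Sat} through this reduction, which is exactly NP-hardness of the promise problem.

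The only step needing real care is the first one: checking that Zuckerman's construction yields the two-sided gap with the exponents stated in terms of the number of vertices $n$ of the output graph. If the gap is phrased in terms of another size parameter, such as the PCP randomness, I would rescale. The output has $n=2^{r+\Oh(\cdot)}$ vertices, and the completeness and soundness clique sizes are $2^{(1-o(1))r}$ and $2^{o(r)}$ (or $n^{c}$ for a constant $c$ that can be made arbitrarily small), so both thresholds become $n^{1-\delta}$ and $n^{\delta}$ once $N$ is large. Finitely many small instances are solved by brute force. If needed, the vertex count can also be adjusted by adding isolated vertices or blowing up vertices into cliques. For a sufficiently small loss parameter, this shifts the exponents by an arbitrarily small amount, which is absorbed by first running the reduction with a smaller $\varepsilon$, say $\delta/2$.
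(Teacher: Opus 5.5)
Your proposal is correct and matches what the paper intends: the paper gives no separate proof and treats the statement as the gap (promise) form of Zuckerman's deterministic \textsc{Max Clique} reduction, followed by complementation via $\alpha(\overline{H'})=\omega(H')$, which are exactly your two steps. Your rescaling worries in the last paragraph are not really needed: the thresholds satisfy $a\leq n$ and $b\geq 1$, since clique sizes lie in $[1,n]$. So any gap reduction with ratio $a/b\geq n^{1-\varepsilon}$ automatically has $a\geq n^{1-\varepsilon}$ and $b\leq n^{\varepsilon}$.
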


\section{Construction of the triangle-free graph}\label{sec:construction}

The reduction is very simple.
Given an $n$-vertex graph $H$, we build an $n^2$-vertex triangle-free graph $G$ as follows.
First replace every vertex $v \in V(H)$ by an independent set $I_v$ of size $n \geqslant 1$.
This yields the intermediate graph $F$ where $xy \in E(F)$ if and only if $x \in I_u$ and $y \in I_v$ for some $uv \in E(H)$.

For every $e \in E(F)$, we introduce an independent Bernoulli variable
\[P_e \sim \text{Bernoulli}(p)~~\text{with}~~p := \frac{1}{10n},\]
i.e., $P_e=1$ with probability~$p$, and $P_e = 0$ otherwise.
The family \[\mathcal P := \{P_e \mid e \in E(F)\}\] is our set of independent random variables (as used in~\cref{sec:prelim}).

For every triangle $xyz$ in~$F$, let $A_{xyz}$ be the ``bad'' event that $P_{xy} = P_{yz} = P_{xz} = 1$.
Set \[\mathcal A := \{A_{xyz} \mid xy, yz, xz \in E(F)\}.\]
Let $F\langle \mathcal P \rangle$ be the graph with vertex set $V(F)$ and edge set $\{e \in E(F) \mid P_e = 1\}$.

We run the Moser--Tardos algorithm on $\mathcal P, \mathcal A$.
We set $G := F\langle \mathcal P_{\mathrm{out}} \rangle$, where $\mathcal P_{\mathrm{out}}$ is the assignment of the variables in~$\mathcal P$ when the algorithm terminates; see~\cref{fig:construction}.
By design, we have:

\begin{observation}\label{obs:triangle-free}
  $G$ is triangle-free.
\end{observation}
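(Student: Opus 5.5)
The observation asserts that $G$ is triangle-free, and the plan is to read this off the termination condition of the Moser--Tardos algorithm, once we have checked that every triangle of $G$ corresponds to an event of $\mathcal A$.

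\textbf{Step 1: every triangle of $G$ is a triangle of $F$.} Let $xyz$ be a triangle in $G = F\langle \mathcal P_{\mathrm{out}} \rangle$. By definition, the edges of $G$ are the edges $e \in E(F)$ with $P_e = 1$ in $\mathcal P_{\mathrm{out}}$. So $xy, yz, xz \in E(F)$, which means $xyz$ is a triangle of $F$. Moreover, $P_{xy} = P_{yz} = P_{xz} = 1$ in $\mathcal P_{\mathrm{out}}$.

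\textbf{Step 2: that triangle witnesses a bad event.} Since $xyz$ is a triangle of $F$, the event $A_{xyz}$ belongs to $\mathcal A$. The equalities from Step 1 say exactly that $A_{xyz}$ holds under $\mathcal P_{\mathrm{out}}$.

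\textbf{Step 3: contradiction with termination.} The algorithm only stops when no event of $\mathcal A$ holds, so no event of $\mathcal A$ holds under $\mathcal P_{\mathrm{out}}$. This contradicts Step 2, hence $G$ has no triangle.

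The only subtlety is that the observation implicitly assumes the algorithm terminates. That is a separate issue: \cref{thm:moser-tardos}, once \eqref{eq:mt} is verified for the chosen $p$, gives a finite expected running time, so termination holds with probability~$1$. Whenever $\mathcal P_{\mathrm{out}}$ is defined, the argument above applies verbatim. I expect no real obstacle here.
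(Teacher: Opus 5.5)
Your proof is correct and is the same argument the paper has in mind when it states the observation ``by design'': since $E(G)\subseteq E(F)$, every triangle of $G$ is a triangle $xyz$ of $F$ on which the bad event $A_{xyz}$ holds, and the Moser--Tardos algorithm halts only when no event of $\mathcal A$ holds. Your remark on termination also matches the paper: \eqref{eq:mt} is verified in \cref{lem:expected-building-time}, and in the final reduction a truncated run is replaced by the edgeless graph, which is trivially triangle-free.
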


\begin{figure}[!ht]
  \centering
  \begin{tikzpicture}[
    vertex/.style={circle,fill=black,inner sep=1.4pt},
    cluster/.style={draw=black!45,rounded corners=2pt,inner sep=5pt},
    edge/.style={draw=black,semithick},
    map/.style={->,semithick}
  ]
    \coordinate (Hu) at (0,1);
    \coordinate (Hv) at (-.65,-.35);
    \coordinate (Hw) at (.65,-.35);
    \foreach \a/\b in {u/v,v/w,w/u} \draw[edge] (H\a) -- (H\b);
    \node[vertex,label=above:$u$] at (Hu) {};
    \node[vertex,label=below left:$v$] at (Hv) {};
    \node[vertex,label=below right:$w$] at (Hw) {};

    \foreach \prefix/\shift in {F/3.7,G/7.6} {
      \foreach \set/\x/\y in {u/0/1,v/-.75/-.55,w/.75/-.55} {
        \foreach \i/\offset in {1/-.26,2/0,3/.26} \coordinate (\prefix\set\i) at ({\shift+\x+\offset},\y);
      }
    }
    \foreach \a/\b in {u/v,u/w} \foreach \i in {1,2,3} \foreach \j in {1,2,3} \draw[edge] (F\a\i) -- (F\b\j);
    \foreach \i in {1,2,3} \foreach \j in {1,2,3} \draw[edge] (Fv\i) to[bend left=40] (Fw\j);
    \foreach \i in {1,2,3} {
      \draw[edge] (Gu\i) -- (Gv\i);
      \draw[edge] (Gv\i) to[bend left=40] (Gw\i);
    }
    \foreach \i/\j in {1/2,2/3,3/1} \draw[edge] (Gw\i) -- (Gu\j);
    \foreach \prefix in {F,G} \foreach \set in {u,v,w} \node[cluster,fit=(\prefix\set1)(\prefix\set3)] {};
    \foreach \prefix in {F,G} \foreach \set in {u,v,w} \foreach \i in {1,2,3} \node[vertex] at (\prefix\set\i) {};
    \foreach \shift in {3.7,7.6} {
      \node at (\shift,1.42) {$I_u$};
      \node at ({\shift-.75},-.95) {$I_v$};
      \node at ({\shift+.75},-.95) {$I_w$};
    }

    \draw[map] (.95,.25) -- node[above] {$\times n$} (2.45,.25);
    \draw[map] (4.95,.25) -- node[above] {$\mathrm{MT}, p$} (6.35,.25);
    \node at (0,-1.35) {$H$};
    \node at (3.7,-1.35) {$F$};
    \node at (7.6,-1.35) {$G$};
  \end{tikzpicture}
  \caption{Construction of the triangle-free $G$ from a~graph~$H$; here $H$ is a~triangle.}
  \label{fig:construction}
\end{figure}

This randomized reduction does indeed terminate and take expected polynomial time. 

\begin{lemma}\label{lem:expected-building-time}
  In expectation, it takes $\Oh(n^9)$ time to build~$G$.
\end{lemma}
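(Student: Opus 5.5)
We apply \cref{thm:moser-tardos} with the constant function $\lambda(A) := \lambda := \frac{1}{n^2}$ for every $A \in \mathcal A$. We then multiply the resulting bound on the expected number of resampling steps by the cost of one step.

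We first count. $F$ has $n^2$ vertices, hence at most $n^4$ edges and at most $n^6$ triangles, so $|\mathcal A| \leq n^6$. For $A = A_{xyz}$, we have $\Pr[A] = p^3 = \frac{1}{1000n^3}$. Each edge $xy$ of $F$ lies in at most $n^2$ triangles of $F$, one for each third vertex. Hence $|\Gamma(A)| \leq 3n^2$, since the events sharing a variable with $A$ are the triangles through one of its three edges.

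Next we verify condition~\eqref{eq:mt}. The right-hand side is at least
\[
\frac{1}{n^2}\Bigl(1-\frac{1}{n^2}\Bigr)^{3n^2}.
\]
For $n \geq 2$, we have $(1-1/n^2)^{n^2} \geq 1/4$, so this is at least $\frac{1}{64n^2}$. That exceeds $\frac{1}{1000n^3}$ as soon as $n \geq 1$, so the condition holds. The case $n=1$ is trivial, since then $F$ has no edges. If the constants turned out too tight, the fallback is to take $\lambda = c/n^2$ for a suitable constant $c$, or simply $\lambda=1/(3n^2)$. In either case, $(1-\lambda)^{3n^2} \geq e^{-O(1)}$, while $\Pr[A] = O(n^{-3}) = o(\lambda)$.

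By \cref{thm:moser-tardos}, the expected number of resamplings is at most
\[
|\mathcal A| \cdot \frac{\lambda}{1-\lambda} \leq n^6 \cdot \frac{2}{n^2} = \Oh(n^4).
\]

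It remains to bound the cost per step. A naive implementation finds a currently violated event by scanning all at most $n^6$ triangles of $F$, and resampling then takes constant time. Adding the initial sampling of at most $n^4$ variables and the construction of $F$, the total expected time is
\[
\Oh(n^4)\cdot\Oh(n^6),
\]
which is already polynomial but exceeds the claimed $\Oh(n^9)$.

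The main obstacle is this accounting, and it is where the argument must be sharpened. The plan is to search for a triangle in the current graph $F\langle\mathcal P\rangle$ rather than scanning all triangles of $F$. This graph has $N=n^2$ vertices, so a triangle can be detected in $\Oh(N^3)=\Oh(n^6)$ time, or faster with matrix multiplication. The stronger fix is to avoid a global rescan after every step. After resampling $\vbl(A)$, only triangles containing one of the three resampled edges can have become violated, and there are $\Oh(n^2)$ of them. We therefore keep a list of violated events and update it in $\Oh(n^2)$ time per step. The total expected time is then
\[
\Oh(n^6) + \Oh(n^4)\cdot\Oh(n^2) = \Oh(n^6) \subseteq \Oh(n^9).
\]
The $\Oh(n^6)$ term covers enumerating all triangles of $F$ once, and this gives the bound of the lemma with room to spare.
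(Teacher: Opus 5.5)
Your proof is correct, but it makes a different trade-off from the paper. The paper takes $\lambda := 2p^3$, which is essentially tight with $\Pr[A]=p^3$. Bernoulli's inequality then gives $(1-\lambda)^{3n^2}\geq 1-6n^2p^3>\frac12$, so \eqref{eq:mt} holds. The Moser--Tardos bound becomes $\abs{\mathcal A}\lambda/(1-\lambda)<4\abs{\mathcal A}p^3\leq n^3/250$, i.e., only $\Oh(n^3)$ expected resamplings, so the naive $\Oh(n^6)$ rescan of all triangles per step already yields $\Oh(n^9)$.

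Your choice $\lambda=1/n^2$ lies far above $\Pr[A]=\Theta(n^{-3})$. This makes checking \eqref{eq:mt} effortless, but it loses a factor~$n$ in the resampling count, giving $\Oh(n^4)$. You compensate with incremental maintenance of the violated events. This is sound: only the $\Oh(n^2)$ triangles sharing an edge with the resampled triangle can change status, and the Moser--Tardos analysis allows any non-anticipating selection rule among violated events. It yields $\Oh(n^6)$ overall, which is stronger than claimed. The same data structure combined with the paper's $\lambda$ would also give $\Oh(n^6)$.

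Your intermediate remark about $\Oh(N^3)$ triangle detection in $F\langle\mathcal P\rangle$ does not by itself fix the accounting, since $\Oh(n^4)\cdot\Oh(n^6)$ is still $\Oh(n^{10})$. What closes it is the incremental update, or fast matrix multiplication with $\omega<2.5$.

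The paper's choice buys something yours does not. The proof of \cref{lem:JS} reuses exactly $\lambda=2p^3$ and the hypothesis check done here. There, the factor $(1-\lambda)^{-\abs{\Gamma(\mathcal I_S)}}$, with $\abs{\Gamma(\mathcal I_S)}$ as large as $m(S)n^2$, must be beaten by $(1-p)^{m(S)}$, which requires $\lambda n^2\ll p$. With $\lambda=1/n^2$ that factor is at least $e^{m(S)}$, so the soundness bound collapses. Your version therefore proves the lemma as stated, but it could not double as the verification the rest of the paper relies on.
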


\begin{proof}
  One can build~$F$ from~$H$ in $\Oh(n^4)$ time.
  Set $\lambda:=2p^3 \in (0,1)$ and, by a slight abuse of notation, also use $\lambda$ for the constant map $A\mapsto2p^3$ on $\mathcal A$.
  Let us check that the pair $\mathcal A, \lambda$ satisfies~\cref{eq:mt}, so that we can apply~\cref{thm:moser-tardos}.

  For every $A_{xyz} \in \mathcal A$, we have $\abs{\Gamma(A_{xyz})} < d := 3n^2$.
  Indeed, there are fewer than $n^2$ triangles of~$F$ containing the edge $xy$ (and likewise for $yz$ and $xz$).
  By Bernoulli's inequality,
  \[(1-\lambda)^d \geqslant 1-d \lambda = 1 - 3n^2 \cdot 2p^3 = 1 - \frac{3}{500 n} > \frac{1}{2}.\]
  Therefore, for every $A \in \mathcal A$, 
  \[\lambda(1-\lambda)^d > \frac{\lambda}{2} = p^3 = \Pr[A],\]
  and~\cref{eq:mt} holds.
  By~\cref{thm:moser-tardos}, if $R$ is the number of resampling steps, then
  \[\mathbb E[R]\leqslant s:=\sum_{A\in\mathcal A}\frac{\lambda(A)}{1-\lambda(A)}=\abs{\mathcal A}\cdot\frac{\lambda}{1-\lambda}.\]
  We have $\abs{\mathcal A}\leqslant n^6$ and $(1-\lambda)^{-1}<2$, so
  \[s<2\abs{\mathcal A}\lambda=4\abs{\mathcal A}p^3\leqslant\frac{n^3}{250},\]
  and hence the expected number of resamplings is $\Oh(n^3)$; since $F$ has fewer than $n^6$ triangles and each search for a violated event can test them all in $\Oh(n^6)$ time, the total expected running time is $\Oh(n^9)$.
\end{proof}

By Markov's inequality, we get the following corollary.

\begin{corollary}\label{cor:building-time}
  $G$ is built in time $\Oh(n^{10})$ with high probability.
\end{corollary}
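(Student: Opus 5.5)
Markov's inequality applied to the running time of \cref{lem:expected-building-time} gives the corollary directly. Let $T$ be the (random) running time of the construction. By \cref{lem:expected-building-time} there is a constant $c>0$ such that $\mathbb E[T]\leqslant c\,n^9$ for every $n\geqslant 1$. Since $T$ is nonnegative, Markov's inequality gives
\[
\Pr\bigl[T\geqslant c\,n^{10}\bigr]\leqslant \frac{\mathbb E[T]}{c\,n^{10}}\leqslant \frac{1}{n}.
\]
Hence, with probability at least $1-1/n$, the graph $G$ is built within $c\,n^{10}=\Oh(n^{10})$ time, which is the claim, with ``high probability'' meaning $1-\Oh(1/n)$.

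The reduction should be a bounded-time randomized algorithm rather than a Las Vegas one, for use in the $\mathrm{BPP}$ argument later. To that end, I would run the Moser--Tardos procedure under a clock and abort, outputting an arbitrary graph, once $c\,n^{10}$ steps have elapsed. This truncation changes the output distribution only on an event of probability at most $1/n$. That small failure probability can be absorbed into the overall error of the reduction together with the later soundness failure probability.

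If a failure probability smaller than $1/n$ is wanted, there are two options. One is to raise the threshold to $c\,n^{9+k}$, which gives failure probability $n^{-k}$. The other is to restart the procedure independently $k$ times with threshold $2c\,n^9$ each; each run fails with probability at most $1/2$, so all runs fail with probability at most $2^{-k}$.

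The argument has no real obstacle. The only points needing care are that $T$ is nonnegative, which makes Markov's inequality applicable, and that the constant hidden in the $\Oh(n^9)$ bound is uniform in $n$. Both follow from the explicit bounds in the proof of \cref{lem:expected-building-time}: $\mathbb E[R]<n^3/250$ resamplings, each costing $\Oh(n^6)$ time, plus $\Oh(n^4)$ time to build $F$.
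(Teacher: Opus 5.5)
Your proof is correct and is exactly the paper's argument: Markov's inequality applied to the $\Oh(n^9)$ expected running time from \cref{lem:expected-building-time}, with threshold $c\,n^{10}$, gives failure probability at most $1/n$. Your truncation remark also matches what the paper does in the proof of \cref{thm:main}, where the construction is cut off after $n^{10}$ steps and the cutoff is reached with probability $\Oh(n^{-1})$.
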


\section{Completeness and soundness}

If $S \subseteq V(H)$ is an independent set of~$H$, then $\bigcup_{v \in S} I_v$ is an independent set of $F$, hence of~$G$.
In particular, since $\abs{\bigcup_{v \in S} I_v} = \abs{S} n$, we have
\begin{observation}\label{obs:alpha-lb}
  $\alpha(G) \geqslant \alpha(H) \cdot n$.
\end{observation}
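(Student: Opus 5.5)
The statement to prove is \cref{obs:alpha-lb}, $\alpha(G)\geqslant \alpha(H)\cdot n$. The argument is deterministic and uses nothing about the random process beyond the fact that $G$ is a spanning subgraph of~$F$.

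First, fix a maximum independent set $S$ of~$H$, so $\abs{S}=\alpha(H)$, and set $X:=\bigcup_{v\in S} I_v$. The sets $I_v$ are pairwise disjoint, each of size $n$, so $\abs{X}=\abs{S}\,n=\alpha(H)\cdot n$.

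Second, check that $X$ is independent in~$F$. Take any $x,y\in X$ with $x\in I_u$ and $y\in I_v$, where $u,v\in S$. If $u=v$, then $x$ and $y$ are nonadjacent in $F$, since $I_u$ is an independent set. If $u\neq v$, then $uv\notin E(H)$ because $S$ is independent, and by the definition of~$F$ this gives $xy\notin E(F)$.

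Third, transfer this to~$G$. We have $V(G)=V(F)$, and $E(G)=\{e\in E(F)\mid P_e=1\}\subseteq E(F)$ for the final assignment $\mathcal P_{\mathrm{out}}$. Removing edges preserves independence, so $X$ is independent in $G$, and $\alpha(G)\geqslant\abs{X}=\alpha(H)\cdot n$.

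No step is a real obstacle. The only point needing care is that the inequality holds for every outcome of the Moser--Tardos algorithm, so no probabilistic argument is required.
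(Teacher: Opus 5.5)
Your proof is correct and follows the paper's argument: lift a maximum independent set $S$ of $H$ to $\bigcup_{v\in S} I_v$, note it is independent in $F$, and conclude it stays independent in the spanning subgraph $G$. You also correctly note that this holds for every outcome of the Moser--Tardos algorithm, so no probabilistic reasoning is needed.
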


We now prove the following lemma.
\begin{lemma}\label{lem:alpha-ub}
  $\alpha(G) \leqslant \lceil 320\,\alpha(H) \cdot n \ln n \rceil$ with high probability.
\end{lemma}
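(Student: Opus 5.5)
Set $k := \lceil 320\,\alpha(H)\, n \ln n\rceil$. The plan is a union bound over all $k$-subsets $S \subseteq V(G)$, bounding for each fixed $S$ the probability that $S$ is independent in $G$ via \cref{thm:hss}. Fix $S$ with $\abs{S}=k$ and let $B_S$ be the event that $P_e=0$ for every $e\in E(F[S])$, so $S$ is independent in $G$ exactly when $B_S$ holds at termination. By \cref{thm:hss} with the constant $\lambda=2p^3$ from the proof of \cref{lem:expected-building-time},
\[
\Pr[S \text{ independent in } G] \leqslant (1-p)^{\abs{E(F[S])}}\,(1-\lambda)^{-\abs{\Gamma(B_S)}} .
\]
Triangles of $F$ sharing an edge with $E(F[S])$ number at most $\abs{E(F[S])}\cdot n^2$. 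So $\abs{\Gamma(B_S)}\leqslant n^2 m$ with $m:=\abs{E(F[S])}$, and $(1-\lambda)^{-n^2 m}\leqslant \exp(4p^3n^2m)=\exp(m/(250n))$ for $n$ large. Combined with $(1-p)^m\leqslant e^{-pm}=e^{-m/(10n)}$, this gives $\Pr\leqslant \exp(-c\,m/n)$ with, say, $c=1/20$.

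The key step is a lower bound on $m=\abs{E(F[S])}$, which is a Turán-type statement. Write $s_v:=\abs{S\cap I_v}\leqslant n$. Then $F[S]$ is the blow-up of $H$ with weights $s_v$, and $m=\sum_{uv\in E(H)} s_u s_v$. I would show $m \geqslant k^2/(2\alpha(H)) - O(kn)$, or more simply $m\geqslant k^2/(4\alpha(H))$ once $k \geqslant 4\alpha(H) n$. One route is greedy: repeatedly pick a vertex $u$ of $H$ with $s_u>0$ that has minimum weighted degree $d_S(u)=\sum_{v\sim u}s_v$ in $H$ restricted to the support, and delete it together with its neighbours. This produces an independent set of $H$, so there are at most $\alpha(H)$ rounds. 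Each round removes weight at most $n + d_S(u)$ and accounts for at least $s_u\, d_S(u)$ edges. A cleaner alternative is Caro--Wei/Turán applied to the graph $F[S]$ itself: $\alpha(F[S])\leqslant\alpha(H)\,n$, and Turán gives $\alpha(F[S])\geqslant k^2/(2m+k)$, hence $m\geqslant \frac{k^2}{2\alpha(H)n}-\frac{k}{2}\geqslant \frac{k^2}{4\alpha(H)n}$. I will use this second route. It is the main step, and it is straightforward once one notices $\alpha(F)=\alpha(H)n$, since a maximum independent set of $F$ may be assumed to be a union of whole classes $I_v$.

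Finally, the union bound. There are at most $\binom{n^2}{k}\leqslant n^{2k}=e^{2k\ln n}$ sets $S$, and each has probability at most $\exp\bigl(-\frac{k^2}{80\,\alpha(H)n^2}\bigr)$. Here the bound on $m$ has to be divided by $n$ and multiplied by $c=1/20$, which gives exactly that exponent. Since $k\geqslant 320\alpha(H)n\ln n$, the exponent is at least $4k\ln n\cdot(n^{-1})$. At this point the constants must be tracked carefully, because a factor $1/n$ appears. Indeed $\frac{k^2}{80\alpha(H)n^2}\geqslant \frac{320\alpha(H)n\ln n\cdot k}{80\alpha(H)n^2}=\frac{4k\ln n}{n}$, which is too weak by a factor of $n$. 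The obstacle is therefore the normalisation: the Turán bound must be used in its form on the $n^2$-vertex ground set, and I expect to redo the computation. The correct Turán bound is $m\geqslant k^2/(2\alpha(H)n)-k/2$, so $pm \approx k^2/(20\alpha(H)n^2)$. Then $k \approx 320\alpha(H)n\ln n$ gives $pm\approx 16 k\ln n/n$, still a factor $n$ short against $2k\ln n$.

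I therefore suspect the true exponent is $pm\geqslant k\cdot \frac{k}{20\alpha(H)n^2}$. For this to be at least $3k\ln n$ one needs $k\gtrsim \alpha(H)n^2\ln n/\ldots$, so I would recheck the role of $p$. Concretely, I would verify whether the union bound should instead be taken over transversal-structured sets, for example choosing only which $I_v$ are heavily hit, to reduce the entropy term from $k\ln n$ to roughly $\alpha(H)n\ln n$. This counting refinement is where I expect the real difficulty to lie.
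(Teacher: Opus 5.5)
\textbf{There is a genuine gap: your edge lower bound is too weak by a factor of $n$, and the fix is not in the counting.}

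Your probabilistic part is correct and matches the paper. Via \cref{thm:hss} with $\lambda=2p^3$ you get $\Pr[S \text{ independent in } G]\leqslant\exp(-m/(20n))$ with $m=\abs{E(F[S])}$. The paper then takes exactly the plain union bound over all $\binom{n^2}{k}\leqslant e^{2k\ln n}$ sets of size $k$. The gap is the lower bound on $m$. Turán applied to $F[S]$ uses only $\alpha(F[S])\leqslant\alpha(H)n$ and gives $m\geqslant k^2/(2\alpha(H)n)-k/2$, which, as you computed yourself, is a factor $n$ short. This loss is real, not a matter of constants. Turán's bound is tight for disjoint unions of cliques, whereas each $S\cap I_v$ consists of up to $n$ pairwise nonadjacent twins. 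For instance, for $H=K_n$ one has $m=\frac12\bigl(k^2-\sum_v s_v^2\bigr)\geqslant\frac12(k^2-nk)$, versus your $k^2/(2n)-k/2$. So your diagnosis of the ``real difficulty'' is off: no refinement of the union bound over structured sets is needed. What is missing is an edge bound computed at the level of $H$ with weights $s_v$, so that the blow-up structure is not thrown away.

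The paper sets $z_v:=s_v/k$ and applies Motzkin--Straus to $\overline H$, giving $\sum_{uv\in E(\overline H)}z_uz_v\leqslant\frac12(1-1/\alpha(H))$. Combined with $1=(\sum_v z_v)^2$, this yields
\[ m=\sum_{uv\in E(H)}s_us_v\geqslant\frac{k^2}{2\alpha(H)}-\frac12\sum_v s_v^2\geqslant\frac{k^2}{2\alpha(H)}-\frac{nk}{2}. \]
Hence $m\geqslant k^2/(4\alpha(H))$ once $k\geqslant 2n\alpha(H)$. The only loss is the diagonal term $\sum_v s_v^2\leqslant nk$, not a factor $n$ in the main term. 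Then $\Pr[S\text{ independent in } G]\leqslant\exp(-k^2/(80n\alpha(H)))$. Since $k\geqslant320\alpha(H)n\ln n$, the exponent is at least $4k\ln n$, and the naive union bound gives $\exp(-2k\ln n)=o(1)$.

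The greedy route you sketched and then discarded would also have worked. The chosen vertices $u_i$ form an independent set of $H$, so there are at most $\alpha(H)$ rounds. Round $i$ removes $s_{u_i}+d_i\leqslant n+d_i$ vertices of $F[S]$, each of current degree at least $d_i$, and hence deletes at least $d_i^2/2$ edges. Thus $\sum_i d_i\geqslant k-\alpha(H)n$, and by Cauchy--Schwarz $m\geqslant(k-\alpha(H)n)^2/(2\alpha(H))$. This is of the right order: for example, it is at least $9k^2/(32\alpha(H))$ when $k\geqslant4\alpha(H)n$, which suffices for the constant $320$.
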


We fix a nonempty set $S\subseteq V(F)$.
Our goal is to upper-bound the probability that $S$ is an independent set in the output graph~$G$ produced when the Moser--Tardos algorithm terminates.
We will then use a~union bound to prove~\cref{lem:alpha-ub}.

We set $s_v := \abs{S \cap I_v}$ for every $v \in V(H)$.
Therefore, \[s := \abs{S} = \sum\limits_{v \in V(H)} s_v.\]
Let $m(S)$ denote the number of edges in $F[S]$.
Thus, \[m(S) = \sum\limits_{uv \in E(H)} s_u s_v.\]
We finally set $z_v := s_v/s$.

We have $\sum_{v \in V(H)} z_v = \frac{1}{s} \sum_{v \in V(H)} s_v = 1$ and $z_v \geqslant 0$ for every $v \in V(H)$.
Therefore, the Motzkin--Straus theorem applied to $\overline H, (z_v)_{v \in V(\overline H)}$ yields the following inequality.

\begin{lemma}[{\cite[Theorem~1]{Motzkin65} applied to the complement~$\overline H$}]\label{lem:compl}
  $\sum\limits_{uv \in E(\overline H)} z_u z_v \leqslant \frac{1}{2} \left(1-\frac{1}{\alpha(H)}\right)$.
\end{lemma}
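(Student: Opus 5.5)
The claimed inequality is a direct instance of the Motzkin--Straus theorem, so the plan is to recall that theorem and check that its hypotheses hold for $\overline H$ and the weights $(z_v)$.

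\textbf{The theorem.} For a graph $K$ with clique number $\omega(K)$, and any vector $(x_v)_{v\in V(K)}$ in the standard simplex (nonnegative entries summing to~$1$),
\[\sum_{uv\in E(K)} x_u x_v \leqslant \frac{1}{2}\left(1-\frac{1}{\omega(K)}\right),\]
with equality attained by the uniform vector on a maximum clique.

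\textbf{Applying it.} Take $K:=\overline H$. A clique of $\overline H$ is exactly an independent set of~$H$, so $\omega(\overline H)=\alpha(H)$. The weights $z_v=s_v/s$ are well defined because $S$ is nonempty, so $s\geqslant 1$. They are nonnegative and sum to~$1$, as noted just before the statement, so they lie in the simplex. Plugging them into the theorem gives exactly the claimed bound. Both the identity $\omega(\overline H)=\alpha(H)$ and the simplex condition are immediate, so the whole argument reduces to this substitution.

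\textbf{Self-contained route, if needed.} One could instead reprove Motzkin--Straus in its standard way.

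\begin{itemize}
\item Among all maximizers of the quadratic form $f(x)=\sum_{uv\in E(K)}x_ux_v$ over the simplex, choose one with minimal support.
\item If two vertices $u,w$ in the support are nonadjacent, then $f$ is linear along the direction $e_u-e_w$. Hence mass can be shifted entirely onto one of them without decreasing $f$, which contradicts minimality of the support.
\item So the support is a clique of some size $k\leqslant\omega(K)$.
\item On a clique, $f(x)=\frac{1}{2}\bigl(1-\sum_v x_v^2\bigr)\leqslant \frac{1}{2}\left(1-\frac{1}{k}\right)$, by the Cauchy--Schwarz bound $\sum_v x_v^2\geqslant 1/k$.
\end{itemize}

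The only step needing care is the support-shifting argument. Linearity of $f$ along $e_u-e_w$ holds precisely because $u$ and $w$ are nonadjacent, so the quadratic term $x_ux_w$ does not appear in~$f$.
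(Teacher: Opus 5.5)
Your proposal is correct and is exactly the paper's argument: the lemma is just the Motzkin--Straus theorem applied to $\overline H$ with the simplex weights $(z_v)$, using $\omega(\overline H)=\alpha(H)$. Your optional minimal-support reproof of Motzkin--Straus is also sound, though the paper simply cites the theorem.
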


We use this lemma to lower-bound the number of edges of $F[S]$ when $S$ is relatively large.

\begin{lemma}\label{lem:many-edges}
  If $s \geqslant 2n \cdot \alpha(H)$, then $m(S) \geqslant \frac{s^2}{4 \alpha(H)}$.
\end{lemma}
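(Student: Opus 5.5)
We start from the identity $m(S) = \sum_{uv \in E(H)} s_u s_v = s^2 \sum_{uv \in E(H)} z_u z_v$ and lower-bound the sum over edges of~$H$ by complementing.
Since $\sum_v z_v = 1$, we have
\[
1 = \Bigl(\sum_{v} z_v\Bigr)^2 = \sum_v z_v^2 + 2\sum_{uv \in E(H)} z_u z_v + 2\sum_{uv \in E(\overline H)} z_u z_v,
\]
because every unordered pair of distinct vertices of~$H$ is an edge either of~$H$ or of~$\overline H$.
Thus
\[
\sum_{uv\in E(H)} z_uz_v = \frac{1}{2}\Bigl(1 - \sum_v z_v^2\Bigr) - \sum_{uv\in E(\overline H)} z_uz_v.
\]
By \cref{lem:compl}, the last sum is at most $\frac12(1-\frac1{\alpha(H)})$, so
\[
\sum_{uv\in E(H)} z_uz_v \geq \frac{1}{2\alpha(H)} - \frac12\sum_v z_v^2 .
\]

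Next we bound the diagonal term.
Each $s_v \leq \abs{I_v} = n$, hence $z_v \leq n/s$.
This gives $\sum_v z_v^2 \leq \max_v z_v \cdot \sum_v z_v \leq n/s$.
Under the hypothesis $s \geq 2n\alpha(H)$, we get $n/s \leq \frac{1}{2\alpha(H)}$.
Therefore
\[
\sum_{uv\in E(H)} z_uz_v \geq \frac{1}{2\alpha(H)} - \frac{1}{4\alpha(H)} = \frac{1}{4\alpha(H)},
\]
and multiplying by $s^2$ yields $m(S) \geq \frac{s^2}{4\alpha(H)}$.

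The only delicate point is the bookkeeping in the square expansion, namely that edges of $H$ and $\overline H$ together account for all off-diagonal pairs, each counted twice. The rest is immediate.
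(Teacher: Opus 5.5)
Your proposal is correct and follows the paper's proof: the same square expansion, the same Motzkin--Straus bound on $\overline H$, and the same use of $s_v \leq n$ to control the diagonal term. The only difference is that you bound the diagonal in normalized form, $\sum_v z_v^2 \leq n/s$, whereas the paper first multiplies by $s^2$ and then uses $\sum_v s_v^2 \leq ns$.
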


\begin{proof}
  Since $\sum_{v \in V(H)} z_v = 1$, we have
  \[1=\left(\sum\limits_{v\in V(H)}z_v\right)^2=\left(\sum\limits_{v\in V(H)}z_v^2\right)+2\sum\limits_{\{u,v\}\in\binom{V(H)}{2}}z_uz_v.\]
  Together with~\cref{lem:compl}, this identity yields
  \[\sum\limits_{uv\in E(H)}z_uz_v
  =\sum\limits_{\{u,v\}\in\binom{V(H)}{2}}z_uz_v-\sum\limits_{uv\in E(\overline H)}z_uz_v
  \geqslant\frac{1}{2}\left(1-\sum\limits_{v\in V(H)}z_v^2\right)-\frac{1}{2}\left(1-\frac{1}{\alpha(H)}\right).\]
  Multiplying by $s^2$, we get
  \[m(S)=\sum\limits_{uv\in E(H)}s_us_v\geqslant\frac{s^2}{2\alpha(H)}-\frac{1}{2}\sum\limits_{v\in V(H)}s_v^2.\]
  Since $s_v \leqslant n$ for every $v \in V(H)$,
  \[m(S) \geqslant  \frac{s^2}{2 \alpha(H)} - \frac{n}{2} \sum\limits_{v \in V(H)} s_v = \frac{s^2}{2 \alpha(H)} - \frac{ns}{2}.\]
  Thus, if $s \geqslant 2n \cdot \alpha(H)$, we have $\frac{ns}{2} \leqslant \frac{s^2}{4 \alpha(H)}$, and we conclude that 
  \[m(S) \geqslant \frac{s^2}{2 \alpha(H)} - \frac{s^2}{4 \alpha(H)} = \frac{s^2}{4 \alpha(H)}.\qedhere\]
\end{proof}
Let $\mathcal I_S$ be the event that $P_e=0$ for every $e\in E(F[S])$.
We have
\[\Pr[\mathcal I_S]=(1-p)^{m(S)}.\]
Let $\mathcal J_S$ be the event that $\mathcal I_S$ holds for the output assignment, equivalently, that $S$ is an independent set in~$G$.

\begin{lemma}\label{lem:JS}
  If $s \geqslant 2n \cdot \alpha(H)$, then $\Pr[\mathcal J_S] \leqslant \exp \left(-\frac{s^2}{80 n\,\alpha(H)}\right)$.
\end{lemma}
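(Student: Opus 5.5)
We apply \cref{thm:hss} to the event $B := \mathcal I_S$, with the same constant map $\lambda = 2p^3$ used in the proof of \cref{lem:expected-building-time}. That proof already checks the hypotheses of \cref{thm:moser-tardos}. The event $\mathcal J_S$ says that $\mathcal I_S$ holds for the output assignment, and this is a special case of $\mathcal I_S$ holding at least once during the execution. Hence
\[\Pr[\mathcal J_S] \leqslant \Pr[\mathcal I_S]\,(1-\lambda)^{-\abs{\Gamma(\mathcal I_S)}} = (1-p)^{m(S)}(1-\lambda)^{-\abs{\Gamma(\mathcal I_S)}}.\]
It remains to bound the two factors separately and compare them.

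\textbf{First factor.} Since $1-p \leqslant e^{-p}$ and, by \cref{lem:many-edges}, $m(S) \geqslant \frac{s^2}{4\alpha(H)}$, we get
\[(1-p)^{m(S)} \leqslant \exp\Bigl(-\frac{s^2}{40 n\,\alpha(H)}\Bigr).\]

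\textbf{Second factor.} Here $\vbl(\mathcal I_S) = \{P_e \mid e \in E(F[S])\}$. A bad event $A_{xyz}$ lies in $\Gamma(\mathcal I_S)$ only if one of its three edges is in $F[S]$. Each edge of $F$ lies in fewer than $n^2$ triangles of $F$: the third vertex is chosen among at most $n$ classes $I_w$, each of size $n$. Therefore
\[\abs{\Gamma(\mathcal I_S)} < n^2\, m(S).\]
Using $1-\lambda \geqslant e^{-2\lambda}$, valid because $\lambda$ is tiny, we obtain
\[(1-\lambda)^{-\abs{\Gamma(\mathcal I_S)}} \leqslant \exp\bigl(2\lambda n^2 m(S)\bigr) = \exp\bigl(4p^3 n^2 m(S)\bigr) = \exp\Bigl(\frac{4\,m(S)}{1000\,n}\Bigr).\]
This is much smaller than the gain $p\,m(S) = \frac{m(S)}{10n}$ from the first factor. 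The main point of the argument is that both exponents scale with $m(S)$, so it is best to combine them before substituting the lower bound on $m(S)$.

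\textbf{Combining.} Together the exponent is
\[-\frac{m(S)}{10n} + \frac{m(S)}{250n} \leqslant -\frac{m(S)}{20n}.\]
Applying \cref{lem:many-edges} to this combined exponent gives
\[\Pr[\mathcal J_S] \leqslant \exp\Bigl(-\frac{s^2}{80 n\,\alpha(H)}\Bigr).\]

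The only delicate points are the triangle-count bound on $\Gamma(\mathcal I_S)$ and verifying $1-\lambda \geqslant e^{-2\lambda}$ for $\lambda \leqslant 1/2$. The rest is bookkeeping of constants, and there is ample slack.
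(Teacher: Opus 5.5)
Your proof is correct and follows the paper's argument essentially step for step: you apply \cref{thm:hss} to $\mathcal I_S$ with $\lambda=2p^3$, bound $\abs{\Gamma(\mathcal I_S)}\leqslant n^2 m(S)$, combine the exponents as $-\frac{m(S)}{10n}+\frac{m(S)}{250n}\leqslant-\frac{m(S)}{20n}$, and only then invoke \cref{lem:many-edges}. Two small remarks: your separate bound on the first factor is never used, and ``at most $n$ classes'' only gives at most $n^2$ triangles per edge rather than fewer than $n^2$, but the non-strict bound is all the argument needs.
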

\begin{proof}
  Again, set $\lambda:=2p^3$ and, by a slight abuse of notation, also use $\lambda$ for the constant map $A\mapsto2p^3$ on $\mathcal A$.
  We already checked in the proof of~\cref{lem:expected-building-time} that $\mathcal A$ and $\lambda$ satisfy the hypotheses of~\cref{thm:moser-tardos}.
  Thus, applying~\cref{thm:hss} to $\mathcal I_S$ and evaluating this event on the output assignment gives
  \[\Pr[\mathcal J_S]\leqslant\Pr[\mathcal I_S]\cdot\prod\limits_{A\in\Gamma(\mathcal I_S)}(1-\lambda(A))^{-1}
  =\Pr[\mathcal I_S]\cdot(1-\lambda)^{-\abs{\Gamma(\mathcal I_S)}}.\]
  There are fewer than $m(S) n^2$ triangles of~$F$ with at~least one edge in $F[S]$.
  Therefore, $\abs{\Gamma(\mathcal I_S)} \leqslant m(S) n^2$.
  We thus obtain that
  \[\Pr[\mathcal J_S]\leqslant(1-p)^{m(S)}\cdot(1-\lambda)^{-m(S)n^2}.\]
  If $\Pr[\mathcal J_S]=0$, the claimed bound is immediate.
  Otherwise, taking natural logarithms gives
  \[\ln\Pr[\mathcal J_S]\leqslant m(S)\ln(1-p)-m(S)n^2\ln(1-\lambda).\]
  Since $p, \lambda \in (0,1/2)$, we have $\ln(1-p) \leqslant -p$ and $-\ln(1-\lambda) \leqslant 2\lambda$, hence
  \[\ln \Pr[\mathcal J_S] \leqslant -m(S)p + 2\lambda m(S) n^2 = -\frac{m(S)}{10n} + \frac{m(S)}{250n} \leqslant -\frac{m(S)}{20n}.\]

  If $s \geqslant 2n \cdot \alpha(H)$, \cref{lem:many-edges} implies that $m(S) \geqslant \frac{s^2}{4 \alpha(H)}$, and thus
   \[ \Pr[\mathcal J_S] \leqslant \exp \left(-\frac{s^2}{80n\,\alpha(H)}\right).\qedhere\]
\end{proof}

We can now prove the main result of this section.

\begin{proof}[Proof of~\cref{lem:alpha-ub}]
  We set the threshold $s_0 := \lceil 320\,\alpha(H)\,n \ln n \rceil$.
  Our goal is to show that \[\Pr[\alpha(G) \leqslant s_0]=1-o(1).\]
  This holds trivially if $s_0>n^2$; we therefore assume that $s_0\leqslant n^2$.

  Note that $\Pr[\alpha(G) \geqslant s_0]$ is equal to the probability that there is at~least~one independent set of size~$s_0$ in~$G$.
  For all sufficiently large~$n$, we have $s_0\geqslant 2n\,\alpha(H)$, so a~union bound and~\cref{lem:JS} give
  \[ \Pr[\alpha(G) \geqslant s_0] \leqslant \binom{n^2}{s_0} \exp \left(-\frac{{s_0}^2}{80n\,\alpha(H)}\right)
  \leqslant \exp \left(2s_0 \ln n -\frac{s_0}{80n\,\alpha(H)} \cdot s_0 \right) \]
  \[ \leqslant \exp \left(2s_0 \ln n - (4 \ln n) \cdot s_0 \right) = \exp \left(-2s_0 \ln n \right) = o(1),\]
  and we conclude.
\end{proof}

\section{Wrapping up}

We can now complete the proof of~\cref{thm:main}.

\begin{reptheorem}{thm:main}
  For every constant $\varepsilon>0$, there is no polynomial-time $N^{\frac{1}{2}-\varepsilon}$-approximation algorithm for \mis on $N$-vertex triangle-free graphs unless $\mathrm{NP}\subseteq\mathrm{BPP}$.
\end{reptheorem}

\begin{proof}
  The claim is trivial for $\varepsilon>\frac{1}{2}$, so we may assume that $\varepsilon\in(0,\frac{1}{2}]$.
  Set $\delta := \frac{\varepsilon}{2} \in (0,\frac{1}{2})$.
    
  By~\cref{thm:zuckerman}, it is NP-hard to distinguish $n$-vertex graphs~$H$ with $\alpha(H)\geqslant n^{1-\delta}$ from those with $\alpha(H)\leqslant n^\delta$.
  Given~$H$, run the construction of~\cref{sec:construction} for at most $n^{10}$ computation steps, where $N:=n^2$.
  If the cutoff is reached, let~$G$ be the edgeless $N$-vertex graph; otherwise, let~$G$ be the output of the construction.
  By~\cref{lem:expected-building-time} and Markov's inequality, the cutoff is reached with probability $\Oh(n^{-1})=o(1)$.

  Thus, $G$ is always triangle-free, is built in time $\Oh(n^{10})=\Oh(N^5)$, and satisfies~\cref{obs:alpha-lb}.
  Suppose, for contradiction, that there is a polynomial-time $N^{\frac{1}{2}-\varepsilon}$-approximation algorithm for \mis on $N$-vertex triangle-free graphs.
  Let~$S$ be the independent set returned by this algorithm on~$G$.
  A~union bound on the cutoff event and the failure event in~\cref{lem:alpha-ub} gives, except with probability $o(1)$, \[\abs{S} \leqslant \alpha(G)\leqslant\left\lceil 320\,\alpha(H)n\ln n\right\rceil.\]
  The approximation guarantee and~\cref{obs:alpha-lb} give
  \[\abs{S} \geqslant \frac{\alpha(G)}{N^{\frac{1}{2}-\varepsilon}}\geqslant\alpha(H)n \cdot N^{\varepsilon-\frac{1}{2}}=\alpha(H)n^{2\varepsilon}.\]
  For all sufficiently large~$n$, if $\alpha(H)\leqslant n^\delta$, then, except with probability $o(1)$, \[\abs{S}\leqslant\left\lceil 320n^{1+\delta}\ln n\right\rceil<n^{1+3\delta}.\]
  If $\alpha(H)\geqslant n^{1-\delta}$, then \[\abs{S}\geqslant n^{1-\delta+2\varepsilon}=n^{1+3\delta}.\]
  Hence, accepting exactly when $\abs{S}\geqslant n^{1+3\delta}$ distinguishes the two cases with probability $1-o(1)$.
  The finitely many smaller inputs can be handled by brute force.
  Since the distinction is NP-hard, this yields $\mathrm{NP}\subseteq\mathrm{BPP}$.
\end{proof}

The same result holds for randomized polynomial-time $N^{\frac{1}{2}-\varepsilon}$-approximation algorithms with success probability at least $\frac{2}{3}$, since $\Oh(\log n)$ independent repetitions, retaining the largest valid independent set returned, amplify the success probability to $1-o(1)$.

\section{Generalization to $\mathcal F$-free graphs}\label{sec:generalization}

We recall that 
\[
  \mu(\mathcal F):=1 - \max_{K\in\mathcal F}\ \min_{\substack{U\subseteq V(K)\\ K[U]\text{ contains a~cycle}}}\frac{\abs{U}-2}{\abs{E(K[U])}-1}.
\]
Note that if every graph in~$\mathcal F$ contains a~cycle, then $\mu(\mathcal F)\in(0,1)$.
The proof of the following theorem closely follows the triangle-free case.

\begin{reptheorem}{thm:generalization}
  Let $\mathcal F$ be a nonempty finite family of graphs, each containing a~cycle.
  For every constant $\varepsilon>0$, there is no polynomial-time $N^{\mu(\mathcal F)-\varepsilon}$-approximation algorithm for \mis on $N$-vertex $\mathcal F$-subgraph-free graphs unless $\mathrm{NP}\subseteq\mathrm{BPP}$.
\end{reptheorem}

\begin{proof}
  Set $\mu:=\mu(\mathcal F)$.
  For every $K\in\mathcal F$, choose $U_K\subseteq V(K)$ attaining the inner minimum in the definition of~$\mu$, and set
  \[
    J_K:=K[U_K],\qquad v_K:=\abs{V(J_K)},\qquad e_K:=\abs{E(J_K)}.
  \]
  Thus,
  \begin{equation}\label{eq:generalization-density}
    \frac{v_K-2}{e_K-1}\leqslant1-\mu.
  \end{equation}

  Let $H$ be an $n$-vertex graph.
  Put
  \[
    \ell:=\left\lceil n^{\frac{1-\mu}{\mu}}\right\rceil
    \qquad\text{and}\qquad
    N:=n\ell.
  \]
  As in~\cref{sec:construction}, replace every vertex $v\in V(H)$ by an independent set $I_v$ of size~$\ell$, and denote the resulting complete blow-up by~$F$.
  For every edge $e\in E(F)$, introduce an independent variable $P_e\sim\text{Bernoulli}(p)$, where
  \[
    p:=cN^{-(1-\mu)}\qquad\text{and}\qquad
    c := \frac{1}{\sqrt{16 e^2_*\abs{\mathcal F}}},
  \]
  with $e_* := \max\limits_{K \in \mathcal F} e_K$.
  Note that $p, c \in (0,1)$, and $c$ depends only on~$\mathcal F$.

\medskip

  For every $K\in\mathcal F$ and every copy~$Q$ of~$J_K$ in~$F$, not necessarily induced, let $A_Q$ be the event that $P_e=1$ for every $e\in E(Q)$.
  Let $\mathcal A$ be the family of all these bad events, and define
  \[
    \lambda(A_Q):=2p^{e_K} \in (0,1)
  \]
  whenever $Q$ is a~copy of~$J_K$.
  A~fixed edge of~$F$ belongs to at~most $2e_K N^{v_K-2}$ copies of~$J_K$.
  Hence, for every bad event~$A_Q$,
  \[
    \sum_{A\in\Gamma(A_Q)}\lambda(A)
    \leqslant e_* \sum_{K\in\mathcal F} 2e_K N^{v_K-2} \cdot 2p^{e_K} \leqslant 4e^2_* \sum_{K\in\mathcal F} N^{v_K-2}p^{e_K} \leqslant 4e^2_* \sum_{K\in\mathcal F} N^{v_K-2}p^{e_K-1}.
  \]
  Moreover,
  \begin{equation}\label{eq:generalization-weight}
    N^{v_K-2}p^{e_K-1} = c^{e_K-1}N^{v_K-2-(1-\mu)(e_K-1)} \leqslant c^{e_K-1} \leqslant c^2,
  \end{equation}
  where the first inequality follows from~\eqref{eq:generalization-density}, and the second follows from $c \leqslant 1$ and $e_K \geqslant 3$.
  Therefore,
  \[
    \sum_{A\in\Gamma(A_Q)}\lambda(A) \leqslant 4e^2_* |\mathcal F| c^2 \leqslant \frac{1}{2}.
  \]
  This justifies the second inequality in the following statement, while the first inequality holds by the Weierstrass product inequality:
  \[
    \lambda(A_Q)\prod_{A\in\Gamma(A_Q)}(1-\lambda(A))
    \geqslant 2p^{e_K}\left(1-\sum_{A\in\Gamma(A_Q)}\lambda(A)\right)
    \geqslant p^{e_K}
    =\Pr[A_Q].
  \]
  The hypotheses of~\cref{thm:moser-tardos} are therefore satisfied.
  Run the Moser--Tardos algorithm with $(P_e)_{e \in E(F)}, \mathcal A$ and let $G$ be the graph defined by its output assignment.
  (Technically, $p$ should be chosen as a~rational number with a~polynomial-length representation, sufficiently close to the displayed value that all the inequalities of this proof remain valid.)
  If $R$ denotes the number of resampling steps, then~\cref{thm:moser-tardos} and the bound $N^{v_K}$ on the number of copies of~$J_K$ give
  \[
    \mathbb E[R] \leqslant \sum_{A\in\mathcal A}\frac{\lambda(A)}{1-\lambda(A)}\leqslant 2 \cdot 2p^3 \sum_{K\in\mathcal F}N^{v_K}=N^{\Oh_{\mathcal F}(1)},
  \]
  where the second inequality holds since $\lambda(A) \leqslant 2p^3$ and $(1-\lambda(A))^{-1} \leqslant 2$.
  All bad events can be enumerated and tested in $N^{\Oh_{\mathcal F}(1)}$ time, so the expected running time is polynomial in~$N$ as well.
  Moreover, $G$ contains no $J_K$ as a~subgraph for any $K\in\mathcal F$.
  In particular, $G$ is $\mathcal F$-subgraph-free, since every copy of~$K$ would contain a~copy of $J_K=K[U_K]$.

  \medskip

  Completeness is unchanged from the triangle-free case:
  \begin{equation}\label{eq:generalization-completeness}
    \alpha(G)\geqslant \ell\alpha(H).
  \end{equation}
  We next establish the corresponding soundness bound.
  Fix a nonempty set $S\subseteq V(F)$, set $s:=\abs{S}$, and let $m(S):=\abs{E(F[S])}$.
  The same calculation as in~\cref{lem:many-edges} (with $\ell$ now replacing~$n$ as the size of each $I_v$) gives
  \[
    m(S)\geqslant\frac{s^2}{2\alpha(H)}-\frac{\ell s}{2}=\frac{s(s-\alpha(H)\ell)}{2\alpha(H)}.
  \]
  Thus, if $s \geqslant 2 \alpha(H) \ell$, then $s - \alpha(H)\ell \geqslant \frac{s}{2}$, so
  \begin{equation}\label{eq:generalization-many-edges}
    m(S)\geqslant\frac{s^2}{4\alpha(H)}.
  \end{equation}

  Let $\mathcal I_S$ be the event that $P_e=0$ for every $e\in E(F[S])$, and let $\mathcal J_S$ be the event that $S$ is an independent set in~$G$.
  For every fixed edge $e\in E(F)$, the copy count above gives
  \[
      \sum_{\substack{A_Q\in\mathcal A\\ \text{with}\,e \in E(Q)}} \lambda(A_Q)
      \leqslant \sum_{K\in\mathcal F} 2e_K N^{v_K-2} \cdot 2p^{e_K}
      \leqslant 4 e_* p \sum_{K\in\mathcal F} N^{v_K-2} p^{e_K-1}
      \leqslant p \cdot 4 e_* \abs{\mathcal F} c^2
      \leqslant \frac{p}{4},
  \]
  where the third inequality holds by~\eqref{eq:generalization-weight}.
  Every event in~$\Gamma(\mathcal I_S)$ depends on some variable $P_e$ with $e \in E(F[S])$, and therefore
  \begin{equation}\label{eq:generalization-IS-weight}
    \sum_{A\in\Gamma(\mathcal I_S)}\lambda(A)
    \leqslant\sum_{e\in E(F[S])}\ \sum_{\substack{A_Q\in\mathcal A\\ \text{with}\,e\in E(Q)}}\lambda(A_Q)
    \leqslant\frac{p\,m(S)}{4}.
  \end{equation}
  Moreover, $\lambda(A)<1/2$ for every $A \in \mathcal A$.
  We apply~\cref{thm:hss} and get 
   \[
      \Pr[\mathcal J_S]
      \leqslant \Pr[\mathcal I_S] \prod_{A\in\Gamma(\mathcal I_S)}(1-\lambda(A))^{-1}
      = (1-p)^{m(S)}\prod_{A\in\Gamma(\mathcal I_S)}(1-\lambda(A))^{-1}
  \]
  Since $1-p\leqslant\exp(-p)$, and $-\ln(1-x) \leqslant 2x$ for every $x \in (0,1/2)$, we have
  \[
      \Pr[\mathcal J_S]
      \leqslant \exp\left(-pm(S) - \sum_{A\in\Gamma(\mathcal I_S)}\ln (1-\lambda(A)) \right)
      \leqslant \exp\left(-pm(S) + 2 \sum_{A\in\Gamma(\mathcal I_S)} \lambda(A) \right)
  \]
  Therefore, by \eqref{eq:generalization-IS-weight},
  \[
      \Pr[\mathcal J_S]
   \leqslant \exp\left(-\frac{p\,m(S)}{2}\right).
  \]
  Together with~\eqref{eq:generalization-many-edges}, this shows that, whenever $s\geqslant2\alpha(H)\ell$,
  \begin{equation}\label{eq:generalization-independent-set}
    \Pr[\mathcal J_S]\leqslant\exp\left(-\frac{ps^2}{8\alpha(H)}\right).
  \end{equation}
  Set $s_0:=\left\lceil 32\,\alpha(H) p^{-1} \ln N \right\rceil$.
  If $s_0>N$, then $\alpha(G)<s_0$ holds trivially, so assume that $s_0\leqslant N$.

  Since $N=\Theta\left(n^{\frac{1}{\mu}}\right)$ and $p^{-1}=\Theta\left(N^{1-\mu}\right)=\Theta(\ell)$,
  we have $s_0 \geqslant 2\alpha(H)\ell$ for all sufficiently large~$n$.
  A~union bound using~\eqref{eq:generalization-independent-set} gives
  \[
      \Pr[\alpha(G)\geqslant s_0]
      \leqslant\binom{N}{s_0}\exp\left(-\frac{p{s_0}^2}{8\alpha(H)}\right)
      \leqslant\exp\left(s_0\ln N-4s_0\ln N\right)
      \leqslant\exp(-3s_0\ln N)=o(1).
  \]
  Therefore, with high probability,
  \begin{equation}\label{eq:generalization-soundness}
    \alpha(G)<s_0=\Oh(\alpha(H)\ell\ln N).
  \end{equation}

  \medskip

  It remains to turn this construction into a worst-case polynomial-time reduction and compare the gap.
  Fix a constant~$q$ such that the expected running time of the construction is $\Oh(N^q)$, and truncate it after $N^{q+1}$ computation steps.
  If the cutoff is reached, output the edgeless $N$-vertex graph.
  By Markov's inequality, the cutoff is reached with probability $\Oh(N^{-1})=o(1)$.

  Since every graph in~$\mathcal F$ contains a~cycle, the resulting graph is $\mathcal F$-subgraph-free whether or not the cutoff is reached, and~\eqref{eq:generalization-completeness} always holds.
  Set $\delta:=\min\{1/4,\varepsilon/(4\mu)\}\in(0,1/2)$.
  By~\cref{thm:zuckerman}, it is NP-hard to distinguish $n$-vertex graphs~$H$ with $\alpha(H)\geqslant n^{1-\delta}$ from those with $\alpha(H)\leqslant n^\delta$.
  Equations~\eqref{eq:generalization-completeness} and~\eqref{eq:generalization-soundness} yield the gap
  \[
    \Omega\left(\frac{n^{1-2\delta}}{\ln N}\right)
    =N^{\mu(1-2\delta)-o(1)}.
  \]
  Since $\mu(1-2\delta)\geqslant\mu-\varepsilon/2>\mu-\varepsilon$, this is larger than $N^{\mu-\varepsilon}$ for all sufficiently large~$N$.
  Such an approximation algorithm would therefore distinguish the two cases with probability $1-o(1)$, and would imply that $\mathrm{NP}\subseteq\mathrm{BPP}$.
\end{proof}

\subparagraph*{AI disclosure.}
\cref{thm:main} was proven essentially autonomously by GPT-5.6 Sol Pro.
The author then suggested the generalization \cref{thm:generalization} and the definition of~$\mu(\mathcal F)$.
The write-up is due to the author, who tried to make the proof of~\cref{thm:main} a~pleasant and effortless read.


\begin{thebibliography}{1}

\bibitem{Bonnet20}
{\'E}douard Bonnet, St{\'e}phan Thomass{\'e}, Xuan~Thang Tran, and R{\'e}mi
  Watrigant.
\newblock An algorithmic weakening of the {Erd{\H{o}}s--Hajnal} conjecture.
\newblock In {\em 28th Annual European Symposium on Algorithms (ESA 2020)},
  volume 173 of {\em Leibniz International Proceedings in Informatics
  (LIPIcs)}, pages 23:1--23:18. Schloss Dagstuhl--Leibniz-Zentrum f{\"u}r
  Informatik, 2020.
\newblock \href {https://doi.org/10.4230/LIPIcs.ESA.2020.23}
  {\path{doi:10.4230/LIPIcs.ESA.2020.23}}.

\bibitem{Dvorak23}
Pavel Dvo{\v{r}}{\'a}k, Andreas~Emil Feldmann, Ashutosh Rai, and Pawe{\l}
  Rz{\k{a}}{\.z}ewski.
\newblock Parameterized inapproximability of independent set in {$H$}-free
  graphs.
\newblock {\em Algorithmica}, 85(4):902--928, 2023.
\newblock \href {https://doi.org/10.1007/s00453-022-01052-5}
  {\path{doi:10.1007/s00453-022-01052-5}}.

\bibitem{Haeupler11}
Bernhard Haeupler, Barna Saha, and Aravind Srinivasan.
\newblock New constructive aspects of the {Lov{\'a}sz} local lemma.
\newblock {\em Journal of the ACM}, 58(6):28:1--28:28, 2011.
\newblock \href {https://doi.org/10.1145/2049697.2049702}
  {\path{doi:10.1145/2049697.2049702}}.

\bibitem{Hastad96}
Johan H{\aa}stad.
\newblock Clique is hard to approximate within $n^{1-\varepsilon}$.
\newblock In {\em 37th Annual Symposium on Foundations of Computer Science,
  {FOCS} 1996, Burlington, Vermont, USA, 14-16 October, 1996}, pages 627--636.
  {IEEE} Computer Society, 1996.
\newblock \href {https://doi.org/10.1109/SFCS.1996.548522}
  {\path{doi:10.1109/SFCS.1996.548522}}.

\bibitem{Monien85}
Burkhard Monien and Ewald Speckenmeyer.
\newblock Ramsey numbers and an approximation algorithm for the vertex cover
  problem.
\newblock {\em Acta Informatica}, 22(1):115--123, 1985.
\newblock \href {https://doi.org/10.1007/BF00290149}
  {\path{doi:10.1007/BF00290149}}.

\bibitem{Moser10}
Robin~A. Moser and G{\'a}bor Tardos.
\newblock A constructive proof of the general {Lov{\'a}sz} local lemma.
\newblock {\em Journal of the ACM}, 57(2):11:1--11:15, 2010.
\newblock \href {https://doi.org/10.1145/1667053.1667060}
  {\path{doi:10.1145/1667053.1667060}}.

\bibitem{Motzkin65}
Theodore~S. Motzkin and Ernst~G. Straus.
\newblock Maxima for graphs and a new proof of a theorem of {Tur{\'a}n}.
\newblock {\em Canadian Journal of Mathematics}, 17:533--540, 1965.
\newblock \href {https://doi.org/10.4153/CJM-1965-053-6}
  {\path{doi:10.4153/CJM-1965-053-6}}.

\bibitem{Murphy92}
Owen~J. Murphy.
\newblock Computing independent sets in graphs with large girth.
\newblock {\em Discrete Applied Mathematics}, 35(2):167--170, 1992.
\newblock \href {https://doi.org/10.1016/0166-218X(92)90041-8}
  {\path{doi:10.1016/0166-218X(92)90041-8}}.

\bibitem{Zuckerman07}
David Zuckerman.
\newblock Linear degree extractors and the inapproximability of {Max Clique}
  and {Chromatic Number}.
\newblock {\em Theory of Computing}, 3(6):103--128, 2007.
\newblock \href {https://doi.org/10.4086/toc.2007.v003a006}
  {\path{doi:10.4086/toc.2007.v003a006}}.

\end{thebibliography}
\end{document}